\documentclass[journal]{IEEEtran}
\usepackage{amsmath,amsfonts}
\usepackage{algorithm}
\usepackage{algpseudocode}
\usepackage{array}
\usepackage[caption=false,font=normalsize]{subfig}
\usepackage{textcomp}
\usepackage{stfloats}
\usepackage{url}
\usepackage{verbatim}
\usepackage{graphicx}
\usepackage{cite}
\usepackage{amsmath}
\usepackage{amssymb}
\usepackage{amsfonts}
\usepackage{amsthm}                    
\newtheorem{theorem}{Theorem}          
\usepackage{booktabs}
\begin{document}

\title{2D Coprime Pilots for Delay-Doppler Sensing \\ in OFDM-ISAC Systems} 

\author{
	\IEEEauthorblockN{ 
		Guanping Shang,
        Zeyan Zhuang,~\IEEEmembership{Graduate Student Member,~IEEE},
        Anzheng Tang,~\IEEEmembership{Graduate Student Member,~IEEE}, 
        Shenghui Song,~\IEEEmembership{Senior Member,~IEEE}, 
		and Chi-Ying Tsui,~\IEEEmembership{Senior Member,~IEEE}
	}
    \thanks{This work was supported in part by the Research Grants Council of Hong Kong under the General Research Fund (Project No. 16214125).}
	\thanks{G. Shang, Z. Zhuang, A. Tang, S. Song and C.Y. Tsui are with the Department of Electronic and 
    Computer Engineering, The Hong Kong University of Science and Technology, 
    Hong Kong, China. S. Song and C.Y. Tsui are also with the Division of Integrative Systems and Design, The Hong Kong University of Science and Technology, Hong Kong, China
    (e-mail: gshang@connect.ust.hk; zzhuangac@connect.ust.hk; eeaztang@ust.hk; eeshsong@ust.hk; eetsui@ust.hk).}

}

\maketitle

\begin{abstract}
Integrated Sensing and Communication (ISAC) is envisioned to endow future 6G systems with seamless sensing capabilities. To support efficient sensing with minimum communication overhead, sparse pilots embedded within communication frames have emerged as a promising solution. 
Along this line of research, existing studies have achieved engaging results in maximizing the unambiguous sensing region. However, jointly maximizing the sensing region and sensing accuracy remains challenging due to the lack of a unified performance metric and an effective pilot design framework. 
This paper jointly optimizes the unambiguous sensing region and sensing accuracy for estimating delay-Doppler (DD) parameters in Orthogonal Frequency Division Multiplexing (OFDM)-ISAC systems, where sensing mutual information (SMI) is adopted as a unified performance metric to characterize the overall sensing capability. Specifically, the joint optimization is formulated as an SMI maximization problem by systematically resolving sensing ambiguity and optimizing sensing accuracy. In particular, based on the generalized Bézout identity, we derive a 2D (time-frequency) coprime condition, which, as far as the authors know, is the first necessary and sufficient condition to achieve the unique estimation of DD parameters in the literature. Under this unambiguous condition, we further propose an Adam-Guided Iterative Refinement (AGIR) algorithm to optimize the sensing accuracy. Numerical results demonstrate the advantage of the proposed framework over existing designs, owing to the freedom offered by the 2D coprime condition in optimizing the sensing accuracy.
\end{abstract}

\begin{IEEEkeywords}
Integrated sensing and communication, pilot design, sensing mutual information, delay-doppler sensing, Coprime. 
\end{IEEEkeywords}

\section{Introduction}

\subsection{Background}

Integrated Sensing and Communication (ISAC) is regarded as a transformative paradigm for next-generation wireless networks by offering simultaneous data transmission and target sensing services \cite{intro1,intro2,5gresourceblocks}. To this end, a promising solution is to integrate sensing functionality into existing wireless networks \cite{ISACtoOFDM}. Given its superior auto-correlation properties, deterministic pilots embedded in communication, e.g., Orthogonal Frequency Division Multiplexing (OFDM), frames offer much more more favorable sensing properties than stochastic communication symbols, and the associated pilot design has attracted much research attention\cite{prsisac,pilottopayload}. However, allocating communication resources to sensing will inevitably degrade communication performance. As a result, it is crucial to minimize the number of sensing pilots placed in communication frames, resulting in sparse pilot patterns \cite{Unified}.  

Unfortunately, sparse sensing pilots will lead to sub-Nyquist sampling for delay and Doppler phase shifts, causing a severe reduction in the unambiguous region for target sensing \cite{fundamentaltradeoff,standardization6G,region}. Furthermore, sparse pilots inherently lead to limited observations for sensing targets, thereby significantly degrading the accuracy of parameter estimation \cite{lindstrom2026integrated,CS,2dotfs,liang2024}. Therefore, efficient sparse pilot designs that can jointly optimize the unambiguous sensing region and sensing accuracy are urgently needed.

\subsection{Related Works and Motivations}

To enlarge the unambiguous sensing region with sparse sampling, existing studies have developed structured pilot patterns based on coprime sampling principles \cite{initialcoprime,Silva2018,liuLowOverhead,Meiperiodiccoprime}. Specifically, the Chinese Remainder Theorem (CRT) has been widely utilized to design comb pilots with coprime sampling periods across different OFDM symbols for sensing purposes\cite{initialcoprime,Silva2018,liuLowOverhead}. Such designs enable unambiguous recovery of target parameters over an extended sensing region. Considering the periodic resource block structure, Mei \textit{et al.} \cite{Meiperiodiccoprime} further proposed an orthogonal pilot design with coprime intervals in both the time and frequency domains to achieve the maximum unambiguous delay-Doppler (DD) sensing region.
Nevertheless, these advances mainly addressed the sensing ambiguity issue, while the degradation of sensing accuracy induced by sparse pilot sampling remains insufficiently resolved\cite{lindstrom2026integrated}.

Realizing the joint optimization of the unambiguous sensing region and the sensing accuracy necessitates overcoming two fundamental challenges. 
Firstly, a unified performance metric that jointly characterizes the sensing region and sensing accuracy is still lacking. Although Wei \textit{et al.} \cite{separateanalysis} investigated both the unambiguous sensing region and the Cramér-Rao bound (CRB) for DD parameter estimation and insightfully revealed an inherent trade-off between ambiguity-free coverage and estimation accuracy, these two aspects were still evaluated using separate metrics. Thus, existing analyses cannot provide explicit guidance for balancing the region-accuracy trade-off, thereby preventing the pilot from approaching its theoretical limit of overall sensing capability \cite{Information-TheoreticLimits}.

Secondly, a pilot design framework that jointly optimizes sensing region and accuracy is needed. Existing methods typically achieve unambiguous sensing by rigidly placing pilots to satisfy coprime and guard-interval constraints in both time and frequency domains \cite{liuLowOverhead,Meiperiodiccoprime,separateanalysis}. However, the rigid constraints limit the freedom for sensing accuracy enhancement\cite{2dotfs,liang2024}. As a result, a more flexible pilot design framework is indispensable to overcome this structural rigidity and enable simultaneous improvement in sensing accuracy.

\IEEEpubidadjcol

In summary, jointly optimizing sensing region and accuracy necessitates a flexible pilot design framework with a unified performance metric. However, existing sensing performance metrics are mostly inherited from traditional radar systems \cite{mmWaveJRC}, measuring the autocorrelation and sidelobe ratios of signals\cite{ ISACwaveformreview,dhanasekaran, Ren, Superresolution}. Such fragmented metrics cannot provide a unified measurement of the sensing region and sensing accuracy.
To this end, sensing mutual information (SMI), which offers a rigorous information-theoretical measure for sensing performance, has recently been proposed for signal design in ISAC systems\cite{MIMOSMI,xie2024}. Existing works have leveraged SMI as the performance metric for sensing pilot design, primarily focusing on improving target detection or channel estimation performance\cite{hua2024, Bazzi2025}. 
Due to its capability in rigorously quantifying sensing uncertainty including both sensing region and accuracy, SMI will be utilized as the unified metric for pilot design. 

With SMI as the unified metric, this paper proposes a sparse pilot design framework to jointly optimize the sensing region and accuracy.  
For that purpose, we first derive the necessary and sufficient condition to achieve the maximum unambiguous sensing region.  
Under this condition, an Adam-Guided Iterative Refinement (AGIR) algorithm is further developed to maximize the sensing accuracy.

\subsection{Contributions}

The contributions of this work are summarized as follows:

\begin{itemize}
\item We establish the SMI between pilot patterns and the target's DD parameters, which measures both the unambiguous sensing region and sensing accuracy. With this unified metric, the joint pilot design is achieved through resolving sensing ambiguity and maximizing sensing accuracy.

\item To resolve the sensing ambiguity, we derive a 2D coprime condition based on the generalized Bézout identity over the integer lattices, which is necessary and sufficient for the unique estimation of the DD parameters. This condition guarantees unambiguous sensing with a more flexible pilot configuration than existing designs, leaving more freedom for sensing accuracy optimization.

\item Subject to the 2D coprime condition, the sensing accuracy optimization problem is formulated. An AGIR algorithm is then proposed to tackle the problem, where the Bayesian Cramér-Rao bound (BCRB) is utilized as a tractable surrogate for optimizing sensing accuracy within the unambiguous sensing region. 

\end{itemize}
Numerical results demonstrate the superiority of the proposed framework over existing baseline designs and highlight the flexibility of the 2D coprime condition in pilot design over existing solutions.

\subsection{Paper Structure and Notations}
\textit{Notations:} Throughout this paper, scalars are denoted by italic letters, while vectors and matrices are represented by boldface lowercase and uppercase letters. The operators $(\cdot)^{T}$ and $(\cdot)^{H}$ denote the transpose and conjugate transpose, respectively. The Hadamard product is represented by $\odot$, and $\mathbb{E}[\cdot]$ denotes the expectation operator. The notations $\mathbb{C}^{M \times N}$, $\mathbb{R}^{M \times N}$, and $\mathbb{Z}^{M \times N}$ denote the set of $M \times N$ complex, real, and integer matrices, respectively.  $\mathcal{CN}(\boldsymbol{\mu}, \mathbf{\Sigma})$ represents the complex Gaussian distribution with mean $\boldsymbol{\mu}$ and covariance $\mathbf{\Sigma}$. $\mathbf{I}_{D}$ refers to the $D \times D$ identity matrix. $\det(\cdot)$ and $\text{gcd}(\cdot)$ denote the matrix determinant and the greatest common divisor, respectively.

\textit{Paper Structure:} The rest of the paper is organized as follows:  
Sec. II introduces the system model. 
Sec. III introduces SMI as the unified performance metric and transforms its maximization into resolving sensing ambiguity and maximizing sensing accuracy. 
Sec. IV derives the 2D coprime condition for resolving sensing ambiguity and presents the AGIR algorithm to further optimize sensing accuracy.
Sec. V reports the experimental results, and Sec. VI concludes the paper.

\section{System Model}

\label{sec:systemmodelforSMI}

As illustrated in Fig.~\ref{senario}, we consider a bistatic OFDM-ISAC system, where a dual-functional transmitter and a sensing receiver jointly support communication and sensing services. Specifically, sensing pilots are periodically embedded within the OFDM frames for sensing purposes. The target-reflected echoes are processed by the sensing receiver for parameter estimation.

\begin{figure}[]
\centering
\includegraphics[width=3.4in]{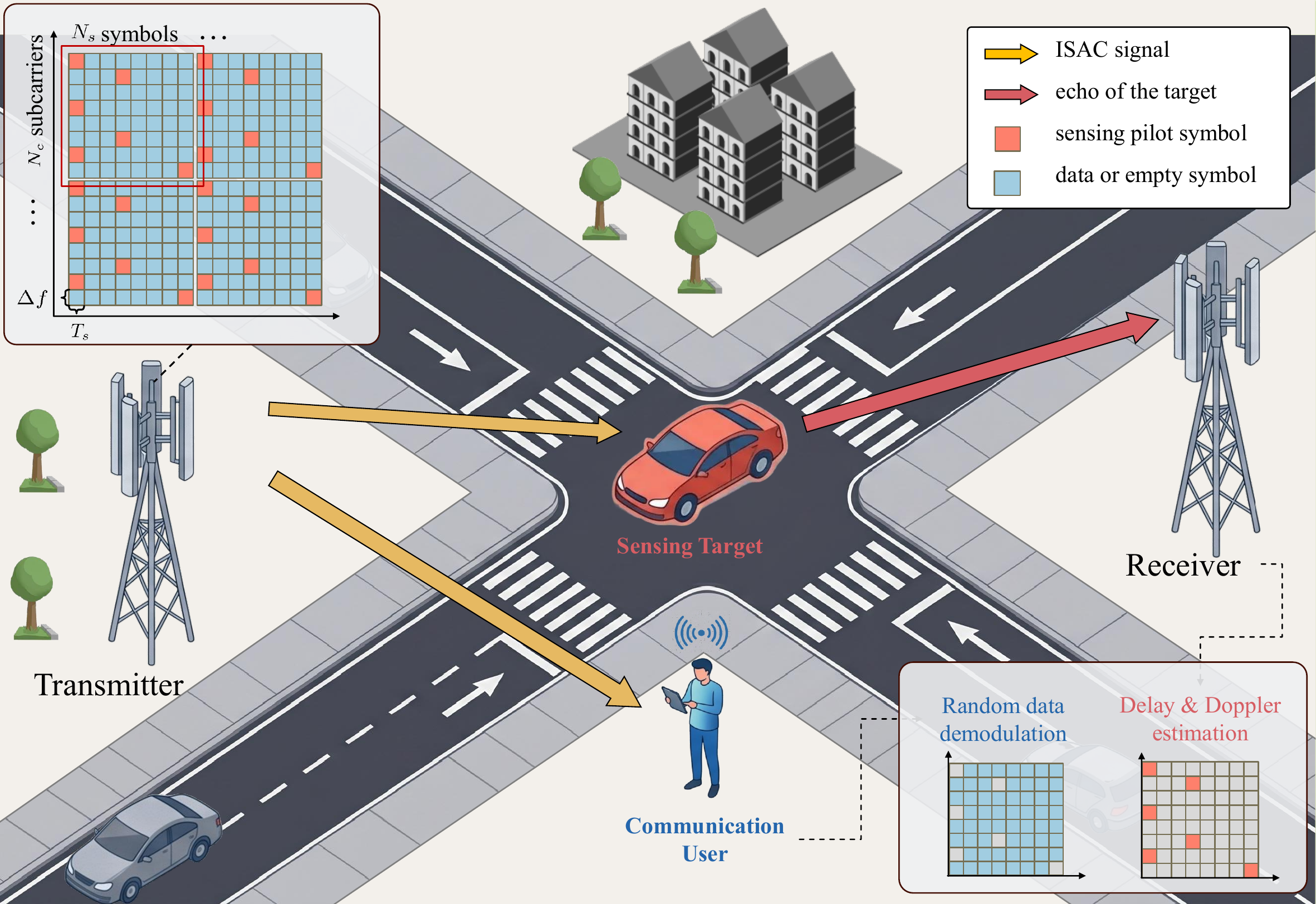}
\caption{Illustration of an OFDM-based ISAC system with 
periodic sparse pilots for joint communication and sensing.}
\label{senario}
\end{figure}

\subsection{Sensing Model}
Without loss of generality, we consider an OFDM frame composed of $B_t \times B_f$ time-frequency resource blocks, where each resource block has $N_c$ subcarriers and $N_s$ symbols \cite{Meiperiodiccoprime}. Let $ b_f$ and $ b_t$ denote the frequency and time indices of resource blocks with $0\leq b_f\leq B_f-1$ and $0\leq b_t\leq B_t-1$, respectively.

Define $\mathcal{P}=\{(m_p,n_p)\mid p \in {1,...,P} \}$ as the pilot set in one resource block, where $ m_p$ and $ n_p$ denote the subcarrier and symbol index with $ 0 \leq m_p \leq N_c-1$ and $0 \leq n_p \leq  N_s-1 $, respectively, and $P=|\mathcal{P}|$ represents the number of pilot symbols in each block. Assuming the pilot patterns are identical across all resource blocks, the index set for all pilot symbols can be expressed as:
\begin{equation}
\label{Pmn}
\begin{aligned}
\mathcal{R} = \bigcup_{(b_f,b_t)} \{ (b_f N_c + m_p,b_t N_s+n_p) |  \forall (m_p,n_p) \in \mathcal{P}  \}. 
\end{aligned}
\end{equation}  
Let $\mathbf{X}_\text{RS} \in \mathbb{C}^{N_c \times N_s}$ denote the pilot matrix in one resource block with its $(m,n)$-th entry given by:
\begin{equation}
\mathbf{X}_\text{RS}(m,n)=
\begin{cases}
1, & (m,n)\in \mathcal{P},\\
0, & (m,n)\notin \mathcal{P}.
\end{cases} 
\end{equation}
Thus, the transmitted baseband pilot signal for the whole frame can be given by:
\begin{equation}\label{eq:tx_signal_re1}
\begin{aligned}
s_p(t) = &
\sum_{b_t =0}^{B_t-1}
\sum_{b_f=0}^{B_f-1}
\sum_{(m_p, n_p)\in \mathcal{P}}
\mathbf{X}_\text{RS} (m_p,n_p) e^{j2\pi m_p\Delta f t}\\
&e^{j2\pi b_fN_c\Delta f t} \cdot  \mathrm{rect}\!\left(\frac{t-(b_tN_s+n_p)T_s}{T_s}\right), 
\end{aligned}
\end{equation}
where $T_s$ represents the symbol duration and $\Delta f=1/T_s$ denotes the subcarrier spacing. $\mathrm{rect}(\cdot)$ refers to the rectangular window function.

The sensing channel for a target is characterized by the DD parameters $\boldsymbol{\theta}=[\tau, \nu]^T$, where $\tau$ and $\nu$ denote the delay and Doppler shift of the target, respectively. 
With a sufficiently long cyclic prefix, the inter-symbol interference can be neglected. 
Under such circumstances, the channel response for the pilot symbols in the $(b_f, b_t)$-th resource block can be denoted as $\mathbf{\Phi}_{\tau, \nu} \in \mathbb{C}^{N_c \times N_s}$ with its $(m, n)$-th element given by: 
\begin{equation}
\label{phiij}
\mathbf{\Phi}_{\tau, \nu} (m, n) = e^{-j2\pi \Delta f(m+b_fN_c)\tau} \cdot e^{j2\pi T_s (n + b_tN_s)\nu}. 
\end{equation}
Accordingly, the received pilot signal  $\mathbf{Y}_\text{RS} \in \mathbb{C}^{N_c \times N_s}$ for one resource block can be expressed as: 
\begin{equation}
\label{eq:rx_signal_re1}
\mathbf{Y}_\text{RS} = \xi \mathbf{\Phi}_{\tau, \nu}  \odot \mathbf{X}_\text{RS} + \mathbf{W},
\end{equation}
where $\xi \in \mathbb{C}$ represents the complex propagation coefficient, and $\mathbf{W} \in \mathbb{C}^{N_c \times N_s}$ denotes the equivalent additive white Gaussian noise with independent and identically distributed (i.i.d.) entries following $\mathcal{CN}(0, \sigma^2)$. For ease of illustration, the residual inter-carrier interference is treated as a random disturbance and incorporated into the noise term $\mathbf{W}$.

\subsection{Sensing Region and Accuracy with Sparse Sampling}

The maximum unambiguous sensing region for DD parameters is determined by the subcarrier spacing ${\Delta f}$ and symbol duration $T_s$ as $\tau_{\max}= \frac{1}{\Delta f}$ and $\nu_{\max}= \frac{1}{T_s}$. This can be achieved with full sampling, i.e., all subcarriers and symbols are utilized for sensing. 
However, the sparse configuration of pilots $\mathbf{X}_\text{RS}$ limits the maximum observable delay and Doppler regions, resulting in ambiguity in DD domain\cite{Silva2018}. As illustrated in Fig.~\ref{fig:ambiguity}, the sub-Nyquist sampling induces a one-to-many mapping from the parameter space $\Theta \subset \mathbb R^{2}$ (upper sub-figure) to the estimation result based on $\mathbf{Y}_\text{RS}$ (lower sub-figure). As a result, the maximum unambiguous sensing region is partitioned into indistinguishable sub-regions $\tau_{\max}^\text{sub}$ and $\nu_{\max}^\text{sub}$, causing ambiguity in estimating DD parameters. 
Furthermore, sparse sampling inevitably leads to limited bandwidth and duration and thus degraded sensing accuracy. Such parameter estimation errors around the ground truth value are referred to as local uncertainty in Fig.~\ref{fig:ambiguity}.

As mentioned above, both unambiguous sensing region and sensing accuracy must be jointly optimized to achieve satisfactory sensing performance. For that purpose, we need a unified performance metric to jointly measure sensing region and accuracy. Unfortunately, traditional metrics were designed to evaluate either sensing region or sensing accuracy and fail to capture their coupled relationship for a holistic assessment of the overall sensing capability. 
In the next section, we propose to utilize SMI as a unified performance metric.

\begin{figure}[tbp]
\centering 
\includegraphics[width=3.5in]{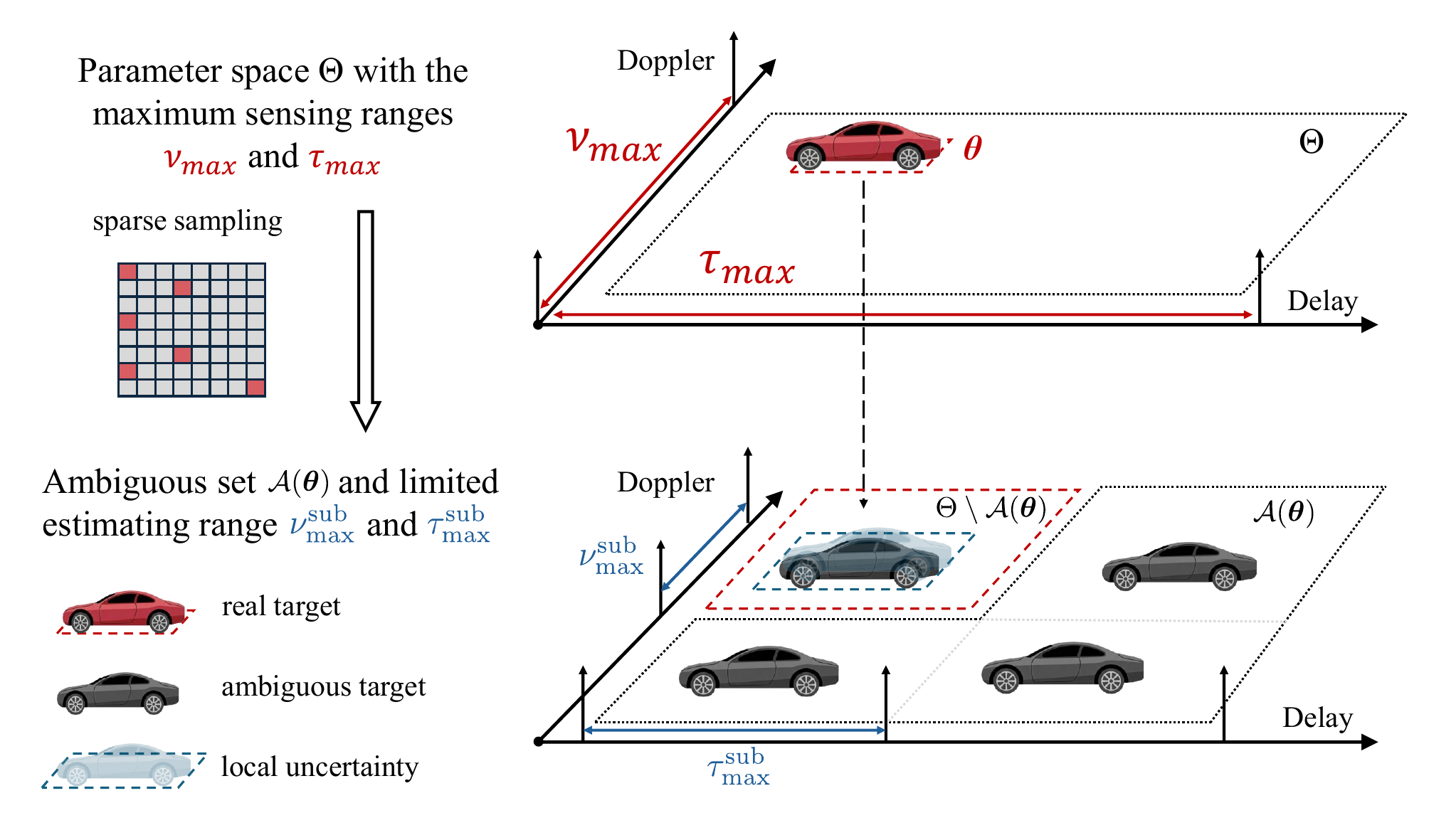}
\caption{
Sparse sampling causes the one-to-many mapping from the parameter space $\Theta$ to the estimation result, which results in ambiguous sensing region and local uncertainty.}
\label{fig:ambiguity}
\end{figure}

\section{SMI Analysis and Problem Formulation}
\label{sec:SMIanaly}

In this section, we establish the SMI between the received pilot $\mathbf{Y}_\text{RS}$ and the DD parameters $\boldsymbol{\theta}$, and explain why it is capable to capture the joint behavior of sensing region and accuracy. Then, we formulate the SMI maximization problem to resolve sensing ambiguity and optimize sensing accuracy.
\subsection{Sensing Mutual Information}

The SMI between  $\boldsymbol{\theta}$ and $\mathbf{Y}_\text{RS}$ can be given by: 
\begin{equation}
\begin{aligned}
I(\mathbf{Y}_\text{RS};\boldsymbol{\theta})
&= h(\mathbf{Y}_\text{RS})-h(\mathbf{Y}_\text{RS}|\boldsymbol{\theta})\\
&=\mathbb{E}_{\boldsymbol{ \theta},\mathbf{Y}_\text{RS}}
\Big[\log{\frac{p(\boldsymbol{\mathbf{Y}_\text{RS}|\theta})}{\int_{\Theta } p(\boldsymbol{\mathbf{Y}_\text{RS}|\vartheta }) 
p(\boldsymbol{\vartheta }) \,d\boldsymbol{\vartheta } }} \Big], 
\end{aligned}
\label{smi1}
\end{equation}
where $h(\cdot)$ denotes the differential entropy for continuous random variables, and $p(\cdot)$ represents the probability density function (PDF). 
By statistically measuring the uncertainty reduction in $\boldsymbol{\theta}$ after knowing $\mathbf{Y}_\text{RS}$, $I(\mathbf{Y}_\text{RS};\boldsymbol{\theta})$ inherently measures both the unambiguous sensing region and sensing accuracy, thereby capable of serving as a unified optimization metric. More detailed justification will be given in Sec.~\ref{Unif-subsec}. 

Denoting $N_\text{RS}$ as the maximum available number of sensing pilots in one resource block, the pilot design can thus be formulated as an optimization problem to maximize the SMI $I(\mathbf{Y}_\text{RS};\boldsymbol{\theta})$ for a representative resource block:  
\begin{equation}
\label{problemformulation0}
\begin{aligned}
\underset{\mathcal{P}}{\text{max}}\quad &I(\mathbf{Y}_\text{RS};\boldsymbol{\theta})\\
\text{s.t.} \quad& P= N_\text{RS}. 
\end{aligned}
\end{equation}

\subsection{The Unified Metric for Sensing Region and Accuracy}\label{Unif-subsec}

Sparse sampling will cause ambiguity between the ground truth $\boldsymbol{\theta}$ and any $\boldsymbol{\theta}'$ that satisfies $p(\mathbf{Y}_\text{RS} \mid \boldsymbol{\theta}) = p(\mathbf{Y}_\text{RS}\mid \boldsymbol{\theta}')$ for any realization of $\mathbf{Y}_\text{RS}$. Specifically, the sensing receiver will not be able to differentiate between $\boldsymbol{\theta}$ and $\boldsymbol{\theta}'$.
As illustrated in Fig.~\ref{fig:ambiguity}, we can define the ambiguity set for a given $\boldsymbol{\theta}$ as: 
\begin{equation}\label{eq:ambiguity_set}
\mathcal{A}(\boldsymbol{\theta}) \triangleq \left\{ \boldsymbol{\theta}' \in \Theta : 
p(\mathbf{Y}_\text{RS} \mid \boldsymbol{\theta}') = p(\mathbf{Y}_\text{RS}\mid \boldsymbol{\theta}),  \boldsymbol{\theta}'\neq \boldsymbol{\theta}  \right\}.
\end{equation}
Thus, the denominator in the second line of \eqref{smi1} can be decomposed into two integrals with respect to $\mathcal{A}(\boldsymbol{\theta})$ and its complement:  
\begin{equation}
\begin{aligned}
\label{smi2}
&I(\mathbf{Y}_\text{RS};\boldsymbol{\theta}) =\mathbb{E}_{\boldsymbol{ \theta},\mathbf{Y}_\text{RS}} \log \\
&\Big[
    \frac{  p(\mathbf{Y}_\text{RS} \mid \boldsymbol{\theta})}{
    \int_{\mathcal{A}(\boldsymbol{\theta})} 
    p(\mathbf{Y}_\text{RS} \mid \boldsymbol{\vartheta}) 
    p(\boldsymbol{\vartheta})\, d\boldsymbol{\vartheta}+
    \int_{\Theta \setminus \mathcal{A}(\boldsymbol{\theta})} 
    p(\mathbf{Y}_\text{RS} \mid \boldsymbol{\vartheta}) 
    p(\boldsymbol{\vartheta})\, d\boldsymbol{\vartheta}
  } \Big]. 
\end{aligned}
\end{equation}
The first integral is performed within the ambiguous set and can be simplified as: 
\begin{equation}
\int_{\mathcal{A}(\boldsymbol{\theta})} p(\boldsymbol{\mathbf{Y}_\text{RS}|\vartheta}) 
p(\boldsymbol{\vartheta}) \, d\boldsymbol{\vartheta} \overset{\text{(a)}}{=}
p(\boldsymbol{\mathbf{Y}_\text{RS}|\theta}) 
\int_{\mathcal{A}(\boldsymbol{\theta})} p(\boldsymbol{\vartheta}) \, d\boldsymbol{\vartheta},  
\end{equation}
where (a) follows from $p(\boldsymbol{\mathbf{Y}_\text{RS}}|\boldsymbol{\vartheta})=p(\boldsymbol{\mathbf{Y}_\text{RS}}|\boldsymbol{\theta})$ for $\boldsymbol{\vartheta} \in \mathcal{A}(\boldsymbol{\theta})$.  
By dividing $p(\mathbf{Y}_\text{RS} \mid \boldsymbol{\theta})$ on both the numerator and denominator of \eqref{smi2}, we can rewrite the SMI as: 
\begin{equation}
\begin{aligned}
\label{smi3}
&I(\mathbf{Y}_\text{RS};\boldsymbol{\theta}) =\\
&\mathbb{E}_{\boldsymbol{ \theta},\mathbf{Y}_\text{RS}}\log \Bigg[
    \frac{1}{
  \underbrace{\pi (\boldsymbol{\theta})}_\text{ambiguity}+
\underbrace{  \frac{
    \int_{\Theta \setminus \mathcal{A}(\boldsymbol{\theta})} 
    p(\mathbf{Y}_\text{RS} \mid \boldsymbol{\vartheta}) 
    p(\boldsymbol{\vartheta})\, d\boldsymbol{\vartheta}
  }{
    p(\mathbf{Y}_\text{RS} \mid \boldsymbol{\theta})
  }}_{\text{uncertainty}} } \Bigg], 
\end{aligned}
\end{equation}
where $\pi (\boldsymbol{\theta})=\int_{\mathcal{A}(\boldsymbol{\theta})} p(\boldsymbol{\vartheta})\, d\boldsymbol{\vartheta}$ represents the probability of $\boldsymbol{\theta}$ falling into the ambiguity set $\mathcal{A}(\boldsymbol{\theta})$. 

\textbf{Remark 1:}
\textit{
It can be observed from \eqref{smi3} that SMI can be decomposed into two adverse factors: the sensing ambiguity due to the ambiguity set $\mathcal{A}(\boldsymbol{\theta})$ and the uncertainty within the unambiguous sensing region $\Theta \setminus \mathcal{A}(\boldsymbol{\theta})$. 
This indicates that SMI is capable of measuring both the unambiguous sensing region and sensing accuracy.}

\subsection{Problem Formulation} 
\label{secproblemformulation}

\begin{figure}[tbp]
\centering
\includegraphics[width=3.4in]{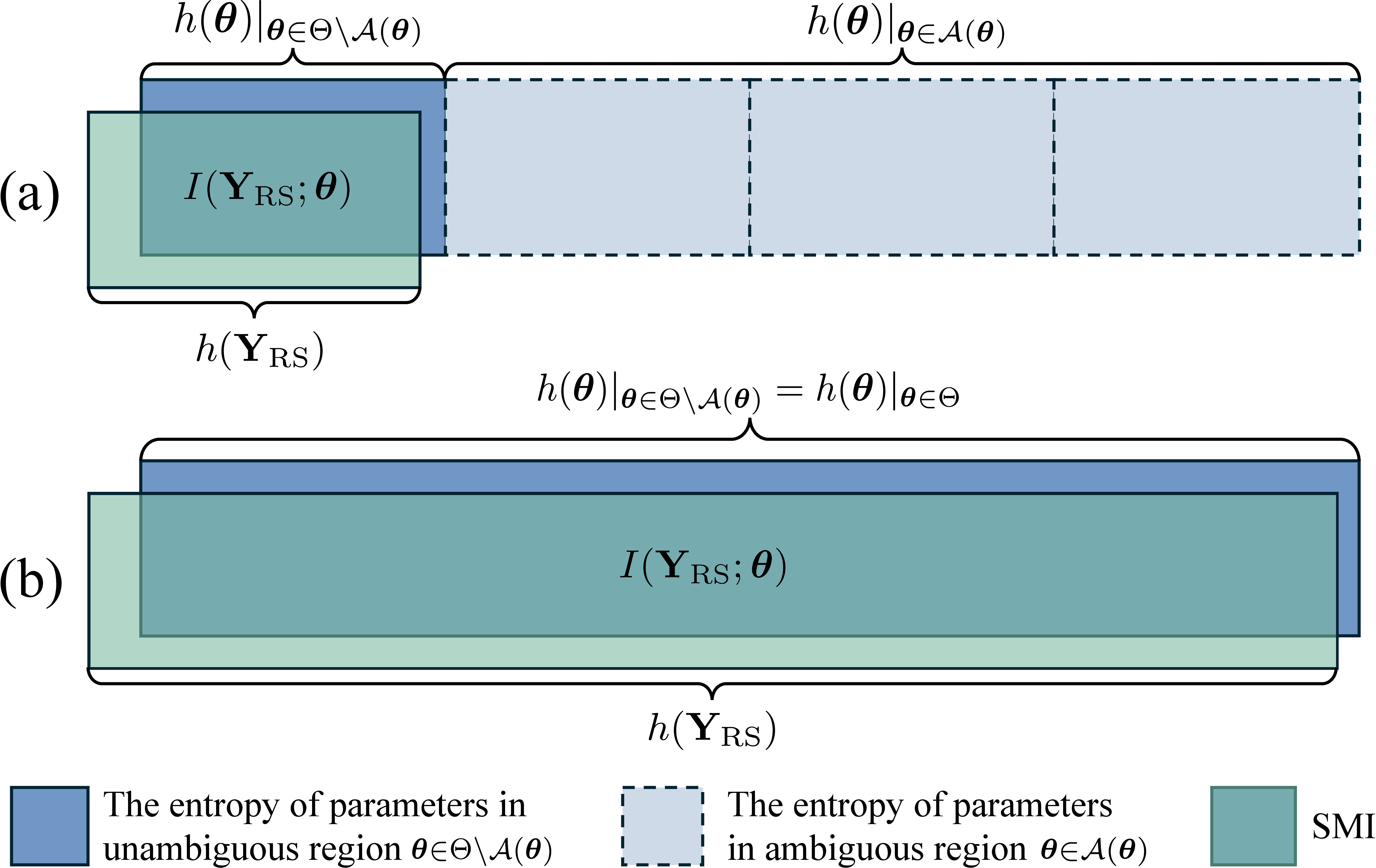}
\caption{Relationship between SMI and $h(\boldsymbol{\theta})$: (a) The presence of sensing ambiguity imposes an irreducible entropy loss for achievable SMI. (b) Resolving sensing ambiguity removes the loss, enabling the SMI to approach the maximum value.
Here, $h(X)|_\mathcal{B} = - \int_\mathcal{B} p(x) \log p(x) dx $ denotes the entropy of a random variable $X$ within a region $\mathcal{B}$.}
\label{ven}
\end{figure}

As mentioned above, the achievable SMI relies on both sensing accuracy and sensing region. We now look into the respective contributions of the two factors. With a given unambiguous sensing region, sensing uncertainty can be resolved as the Signal-to-Noise Ratio (SNR) increases. 
Specifically, as SNR increases, the SMI will approach the entropy of $\boldsymbol{\theta}$ within the unambiguous sensing region, namely the restricted entropy $h({\boldsymbol{\theta}}) |_{\boldsymbol{\theta} \in \Theta \setminus \mathcal{A}(\boldsymbol{\theta}) }$, as shown by the dark-blue block in Fig.~\ref{ven}a. 
However, there remains an irreducible entropy loss $h(\boldsymbol{\theta})|_{\boldsymbol{\theta} \in  \mathcal{A}(\boldsymbol{\theta})}$ (the light-blue blocks in Fig.~\ref{ven}a) due to the existence of ambiguous regions. 
This indicates that sensing ambiguity fundamentally limits the achievable SMI, and the maximum SMI can only be attained when the ambiguity is resolved, as illustrated in Fig.~\ref{ven}b. In other words, sensing ambiguity determines the upper bound of SMI, while improving sensing accuracy enables approaching this bound. Therefore, we first resolve the sensing ambiguity and then optimize the sensing accuracy.

According to \eqref{smi3}, eliminating the ambiguity term $\pi (\boldsymbol{\theta})$ is equivalent to enforcing an empty ambiguity set ($\mathcal{A}(\boldsymbol{\theta}) = \emptyset$). This condition restricts the pilot pattern $\mathcal{P}$ to the following feasible set:  
\begin{equation}
\Omega \triangleq \left\{ \mathcal{P} : \mathcal{A}(\boldsymbol{\theta}) = \boldsymbol{\emptyset}, \ \forall \boldsymbol{\theta} \in \Theta\right\}.    
\end{equation}
Then, we can maximize the sensing accuracy under the unambiguous constraint:  
\begin{equation}
\label{problemformulation}
\begin{aligned}
\underset{\mathcal{P}}{\text{max}}
\ \mathcal{U(P)}=
\mathbb{E}_{\boldsymbol{ \theta},\mathbf{Y}_\text{RS}} &\log\left[
    \frac{p(\mathbf{Y}_\text{RS} \mid \boldsymbol{\theta}; \mathcal{P})}
    {\int_{\Theta \setminus \mathcal{A}(\boldsymbol{\theta})} 
        p(\mathbf{Y}_\text{RS} \mid \boldsymbol{\vartheta}; \mathcal{P}) 
        p(\boldsymbol{\vartheta}) \, d\boldsymbol{\vartheta}}
\right],\\
\text{s.t.}  \quad& \mathcal{P} \in \Omega, \\
& P = N_\text{RS}, 
\end{aligned}
\end{equation}
where the objective $\mathcal{U}(\mathcal{P})$ measures the sensing accuracy as the inverse of the local uncertainty.

\section{Coprime Pilots for Delay-Doppler Sensing}
\label{sec:1d_model}

This section introduces the pilot design framework to maximize the sensing accuracy while resolving the sensing ambiguity. Specifically, in Sec.~\ref{sec:1d_model}-A, we derive a 2D coprime condition for unambiguous sensing, based on the generalized Bézout identity. Subsequently, we provide an insight into the sensing region and sensing accuracy by investigating the Ambiguity Function (AF) in Sec.~\ref{sec:1d_model}-B. Motivated by the insight from the above analysis, in Sec.~\ref{sec:1d_model}-C we use BCRB as a surrogate for measuring sensing accuracy, and in Sec.~\ref{sec:1d_model}-D we propose an AGIR algorithm to maximize sensing accuracy under 2D coprime constraints.

\subsection{Coprime Condition for Resolving Sensing Ambiguity}
\label{subsec:1d_delay}

We first characterize sensing ambiguity by a system of integer equations with respect to DD parameters. Subsequently, based on the generalized Bézout's identity, we establish the necessary and sufficient condition to guarantee a unique solution to the system.

\subsubsection{Ambiguity Regions}
With the periodic structures across different resource blocks, the maximal time and frequency sampling periods are equal to the inter-block intervals $N_c$ and $N_s$, respectively. Consequently, the minimal unambiguous regions for delay and Doppler can be given by $\tau_{\min}^\text{sub}= \frac{1}{N_c \Delta f}$ and $\nu_{\min}^\text{sub}=  \frac{1}{N_s T_s}$, respectively \cite{Silva2018}. 

Fig.~\ref{integers} illustrates the minimal unambiguous regions caused by the periodic sampling over resource blocks. It can be observed from Fig.~\ref{integers} that, although the receiver can uniquely resolve the fractional components $\tau^r$ (and $\nu^r$) of the DD parameters within the minimal unambiguous region defined by $\tau_{\min}^\text{sub}$ (and $\nu_{\min}^\text{sub}$), it suffers from severe ambiguity regarding which sub-region the target belongs to. Specifically, the target can be interpreted at multiple locations in DD domain:  
\begin{equation}
\label{adelayDoppleramb}
\begin{cases}
\tau= \tau^r + \frac{q_\tau}{N_c \Delta f}, \\
\nu= \nu^r + \frac{q_\nu}{N_s T_s},  
\end{cases}
\end{equation}
where $q_\tau \in \{0, 1, \dots, N_c-1\}$ and $q_\nu \in \{0, 1, \dots, N_s-1\}$ are referred to as the ambiguity integer for delay and Doppler, respectively. In other words, the receiver cannot distinguish the multiple equivalent delay and Doppler pairs in \eqref{adelayDoppleramb} due to the confusion caused by ambiguity integers $q_\tau$ and $q_\nu$.

\begin{figure}[t]
\centering
\includegraphics[width=3.45in]{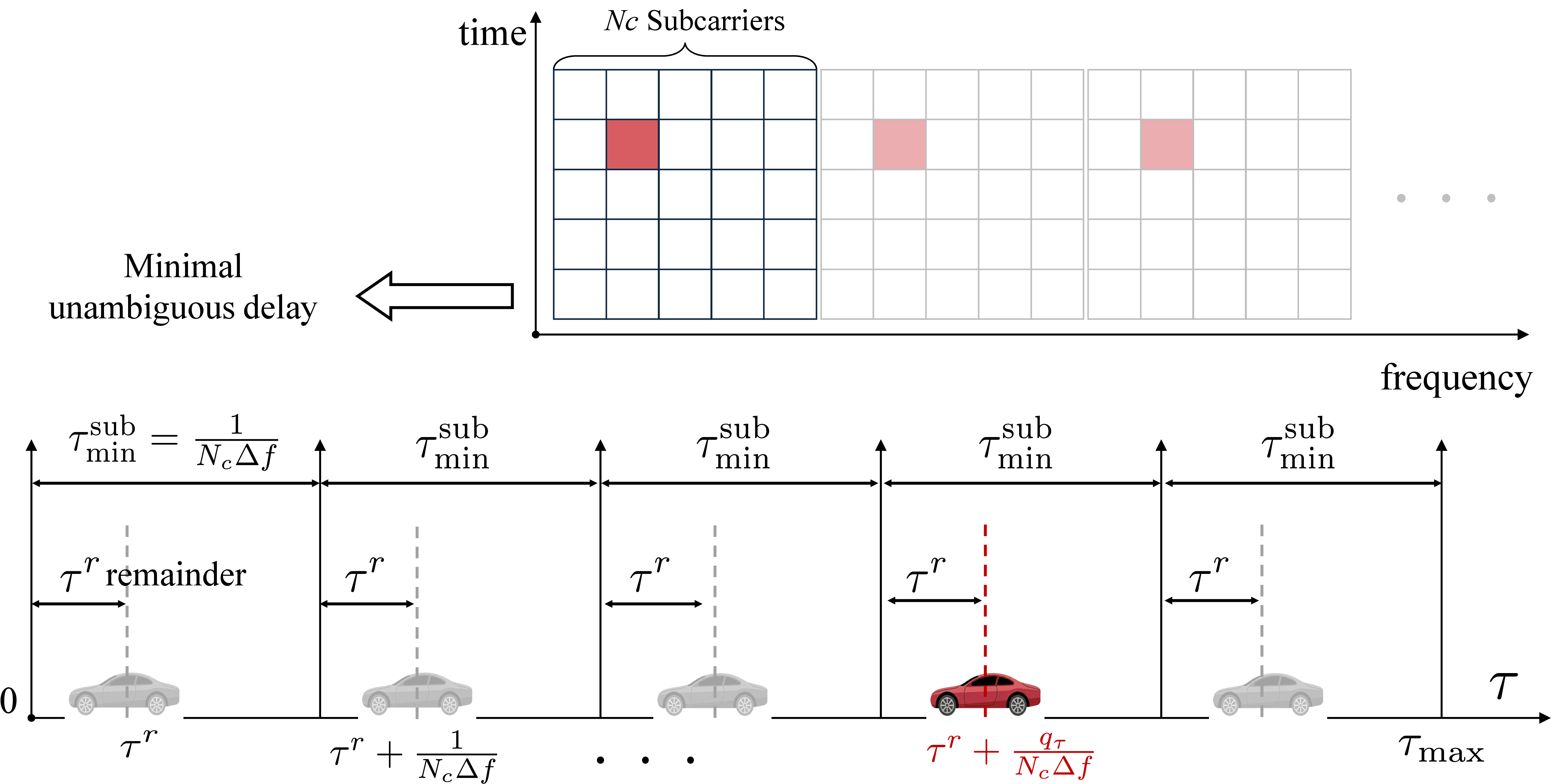}
\caption{Illustration of unambiguous sub-regions and ambiguity integers under periodic resource block sampling (Taking delay as an example).}
\label{integers}
\end{figure}

\subsubsection{Congruence Equations for Solving Ambiguity Integers}
To solve the ambiguity integers, we can leverage additional fine-grained phase shifts in the received signals introduced by the intra-block pilot patterns. In particular, the intra-block pattern can be designed to further resolve the ambiguity integers $q_\tau$ and $q_\nu$, so that the maximum unambiguous sensing region defined by $\tau_{\text{max}}$ and $ \nu_{\text{max}}$ can be achieved.

According to \eqref{phiij}, the channel response $\mathbf{\Phi}^{\mathcal P}_{\tau, \nu} \in \mathbb{C}^{N_c \times N_s}$ induced by the $(m_p, n_p)$-th intra-block pilot can be expressed as: 
\begin{equation}
\label{phaseofP}
\mathbf{\Phi}^{\mathcal P}_{\tau, \nu} (m_p, n_p) = 
 e^{j2\pi(T_s n_p\nu-\Delta f m_p \tau )}.  
\end{equation}
To enumerate the fine-grained phases brought by intra-block pilots, we calculate the pairwise phase differences by the conjugate products $\mathbf{\Phi}^{\mathcal P}_{\tau, \nu}(m_i, n_i) (\mathbf{\Phi}^{\mathcal P}_{\tau, \nu}(m_j, n_j))^*$. According to \eqref{adelayDoppleramb}, the phase of the conjugate product is given by:  
\begin{equation}
\label{delayDoppleramb}
\begin{aligned}
&e^{j2\pi (n_i-n_j) T_s \left( \nu + \frac{q_\nu}{N_s T_s} \right)}
e^{- j2\pi(m_i-m_j) \Delta f \left( \tau + \frac{q_\tau}{N_c \Delta f} \right)}\\
= &e^{j2\pi \left( \Delta n_{ij} T_s\nu - \Delta m_{ij} \Delta f\, \tau\right)}
e^{j2\pi\frac{D_{ij}}{N_c N_s} }, 
\end{aligned} 
\end{equation}
where $\{ (\Delta m_{ij}, \Delta n_{ij})=(m_i-m_j, n_i-n_j) \mid \forall i, j \in {1,...,P}, i \neq j \}$ 
denotes the index difference vectors of distinct pilots, as illustrated in Fig.~\ref{coprime}a. 
The residual phase $D_{ij}$ observed by the receiver is given by:   
\begin{equation}
D_{ij}\equiv \left( N_c \Delta n_{ij}  q_\nu -N_s \Delta m_{ij}  q_\tau \right)  \pmod {N_c N_s}.
\label{eq:Ds}
\end{equation}
To achieve unambiguous sensing, the ambiguity integer pair $(q_\tau, q_\nu)$ in the wrapped phase $D_{ij}$ must have unique solutions.

\begin{figure}[t]
\centering
\includegraphics[width=3.5in]{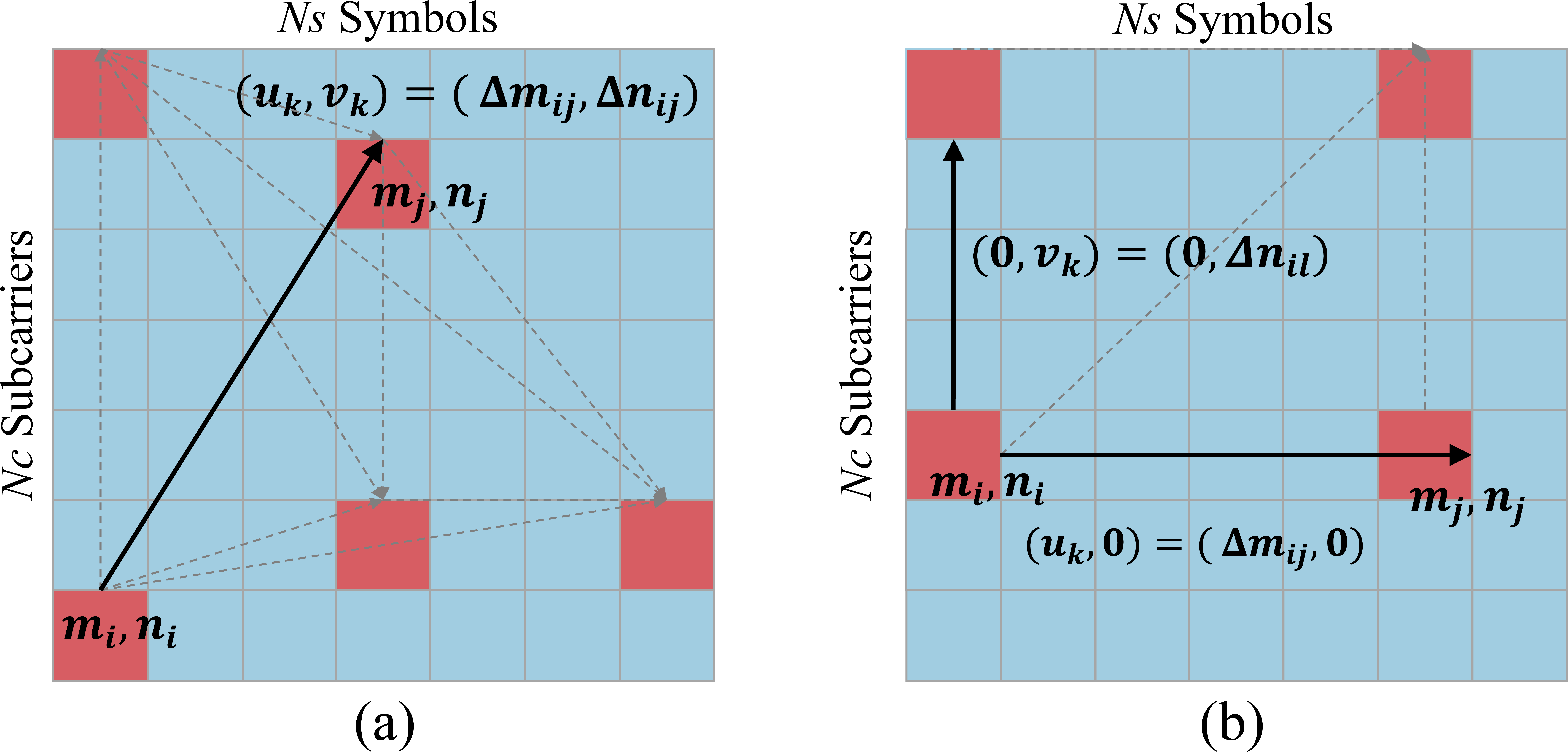}
\caption{Illustration of the pilot pattern and pilot index difference vectors of (a) 2D coprime and (b) orthogonal coprime for resolving sensing ambiguity.}
\label{coprime}
\end{figure}

For ease of illustration, we map the distinct pilot pairs $(i, j)$ into a single sequential index $k \in \{1, 2, \dots, R_d\}$, where $R_d$ denotes the total number of distinct pairwise difference vectors. Specifically, we denote the subcarrier and symbol index differences for the $k$-th pair as $(u_k, v_k)$ to replace $(\Delta m_{ij}, \Delta n_{ij})$.
Similarly, the observed phase $D_{ij}$ can be re-indexed into the vector $\mathbf{D} = [D_1, \dots, D_{R_d}]^T \in \mathbb{Z}^{R_d \times 1}$. We further define the index difference matrix $\mathbf{P} \in \mathbb{Z}^{R_d \times 2} $ as:   
\begin{equation}
\mathbf{P} = \begin{bmatrix} \mathbf{u}& \mathbf{v}\end{bmatrix}= \begin{bmatrix} u_1 & v_1 \\ u_2 & v_2 \\ \vdots & \vdots \\ u_{R_d} & v_{R_d} \end{bmatrix}.
\label{P}
\end{equation}
Consequently, equation \eqref{eq:Ds} can be rewritten as a system of congruence equations:  
\begin{equation}
\mathbf{P} \ [-N_s q_\tau, N_c q_\nu]^T \equiv \mathbf{D}\pmod {N_c N_s}.
\label{eq:pqd}
\end{equation}
The pilot index difference matrix $\mathbf{P}$ can be designed to ensure that the system of congruence equations in \eqref{eq:pqd} has a unique solution of ambiguity integers $(q_{\tau}, q_{\nu})$. 

\subsubsection{2D Coprime Condition}
To uniquely solve the ambiguity integers, $\mathbf{P}$ is required to isolate the individual integers from the ambiguity integer pairs and identify their unique values. This is essentially a weighted combination problem for difference vectors $(u_{k}, v_{k})$.
Taking delay estimation as an example, there needs to exist a set of integer weights $\boldsymbol{\alpha} = [\alpha_1, \dots, \alpha_{R_d}]^T$ such that the weighted summation of the observation yields a unique $q_\tau$ while eliminating the Doppler-dependent term $q_\nu$:
\begin{equation}\label{20}
\sum_{k=1}^{R_d} \alpha_k D_k \equiv-N_s q_\tau \pmod{N_c N_s} .
 \end{equation}
For \eqref{20} to hold for all possible ambiguity integers $q_\tau$, the elements of the index difference matrix $\mathbf{P}$ satisfying \eqref{sunau1} together with the Doppler cancellation condition, which is given by: 
\begin{equation}
\label{sunau1}
\begin{cases}
\sum_{k=1}^{R_d} \alpha_k u_k \equiv 1 \pmod{N_c}&(\text{Delay Isolation}) ,\\
\sum_{k=1}^{R_d} \alpha_k v_k \equiv 0 \pmod{N_s}&(\text{Doppler Null}). 
\end{cases} 
\end{equation}
Similarly, to recover $q_\nu$, there needs to exist weights $\boldsymbol{\beta}$ satisfying:  
\begin{equation}
\label{sunau}
\begin{cases}
\sum_{k=1}^{R_d} \beta_k u_k  \equiv 0 \pmod{N_c}&(\text{Doppler Isolation}), \\
\sum_{k=1}^{R_d} \beta_k v_k \equiv 1 \pmod{N_s}&(\text{Delay Null}).
\end{cases} 
\end{equation}
According to the generalized Bézout's identity on the 2D integer ring \cite{newman1972integral}, integer weights $\alpha$ and $\beta$ satisfying \eqref{sunau1} and \eqref{sunau} guarantee the unique recovery of $(q_\tau,q_\nu)$. The sufficiency and necessity of the existence of $\boldsymbol{\alpha}$ and $\boldsymbol{\beta}$ can be characterized by the 2D coprime condition given in the following theorem. 
\begin{theorem}[2D Coprime Condition]
\label{theorem2}
The ambiguity integers $(q_\tau, q_\nu)$ can be uniquely determined from the observed phase vector $\mathbf{D}$ if and only if there exists a subset of difference vectors in $\mathbf{P}$ $($indexed by $l,m,n \in \{1,2,\dots,R_d\})$ such that their cross-determinants and the periodicity terms have the greatest common divisor of 1: 
\begin{equation}
\label{2dcoprime}
\gcd \big( \det(\mathbf{P}_{mn}), \, u_l N_s, \, v_l N_c, \, N_c N_s \big) = 1,
\end{equation}
where $\mathbf{P}_{mn} = [u_m, v_m; u_n, v_n] $ denotes the $2 \times 2$ submatrix formed by the $m$-th and $n$-th difference vectors.
\end{theorem}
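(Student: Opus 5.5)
The plan is to recast unique recoverability of $(q_\tau,q_\nu)$ as a statement about an integer lattice, and then apply the generalized Bézout identity \cite{newman1972integral} in its determinantal-divisor form.

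\textbf{Step 1: uniqueness as injectivity.} By \eqref{eq:pqd}, the observation is the image of $(q_\tau,q_\nu)$ under the group homomorphism
\[
\varphi:\ \mathbb{Z}_{N_c}\times\mathbb{Z}_{N_s}\to\mathbb{Z}_{N_cN_s}^{R_d},\qquad
(q_\tau,q_\nu)\mapsto \big(N_c v_k q_\nu - N_s u_k q_\tau\big)_{k=1}^{R_d}.
\]
This map is well defined, since $N_s u_k q_\tau$ only matters modulo $N_cN_s$ when $q_\tau$ is taken modulo $N_c$, and likewise for $q_\nu$. Unique determination of $(q_\tau,q_\nu)$ from $\mathbf{D}$ is therefore equivalent to $\ker\varphi=\{0\}$.

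\textbf{Step 2: injectivity iff weights $\boldsymbol{\alpha},\boldsymbol{\beta}$ exist.} I will show that $\ker\varphi=\{0\}$ holds if and only if integer weights satisfying \eqref{sunau1} and \eqref{sunau} exist.
\begin{itemize}
\item Sufficiency is the computation \eqref{20}. It returns $-N_sq_\tau \bmod N_cN_s$, which determines $q_\tau\bmod N_c$, and the $\boldsymbol{\beta}$ combination determines $q_\nu\bmod N_s$ in the same way.
\item For necessity, consider the sublattice
\[
\mathcal{L}=\operatorname{span}_{\mathbb{Z}}\{(u_k,v_k)\}_{k}+N_c\mathbb{Z}\times N_s\mathbb{Z}\subseteq\mathbb{Z}^2 .
\]
Suppose $\mathcal{L}\neq\mathbb{Z}^2$. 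The finite group $\mathbb{Z}^2/\mathcal{L}$ is a nontrivial quotient of $\mathbb{Z}_{N_c}\times\mathbb{Z}_{N_s}$, so it has a nontrivial character. That character has the form $(a,b)\mapsto e^{j2\pi(-aq_\tau/N_c+bq_\nu/N_s)}$ for some $(q_\tau,q_\nu)\neq 0$.
\item This character is trivial on every $(u_k,v_k)$, which says exactly that $N_cv_kq_\nu-N_su_kq_\tau\equiv0 \pmod{N_cN_s}$ for all $k$. Hence $(q_\tau,q_\nu)\in\ker\varphi$, a nonzero kernel element.
\end{itemize}
Consequently injectivity is equivalent to $\mathcal{L}=\mathbb{Z}^2$. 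This is in turn equivalent to $(1,0),(0,1)\in\mathcal{L}$, which is precisely the existence of $\boldsymbol{\alpha}$ and $\boldsymbol{\beta}$.

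\textbf{Step 3: Bézout step.} $\mathcal{L}$ is the row lattice of the stacked matrix
\[
\mathbf{M}=\begin{bmatrix}\mathbf{P}\\ N_c & 0\\ 0 & N_s\end{bmatrix}.
\]
By the generalized Bézout identity (Smith normal form over $\mathbb{Z}$), the index $[\mathbb{Z}^2:\mathcal{L}]$ equals the gcd of all $2\times2$ minors of $\mathbf{M}$. These minors are of four kinds:
\begin{itemize}
\item $\det(\mathbf{P}_{mn})$, from two rows of $\mathbf{P}$;
\item $u_lN_s$, from row $l$ of $\mathbf{P}$ and the row $(0,N_s)$;
\item $-v_lN_c$, from row $l$ of $\mathbf{P}$ and the row $(N_c,0)$;
\item $N_cN_s$, from the two periodicity rows.
\end{itemize}
Hence $\mathcal{L}=\mathbb{Z}^2$ if and only if the gcd of this whole family is $1$.

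Sufficiency of \eqref{2dcoprime} then follows at once. If the gcd for one choice of $l,m,n$ is already $1$, the sublattice generated by rows $l,m,n$ together with the two periodicity rows is all of $\mathbb{Z}^2$, so $\mathcal{L}=\mathbb{Z}^2$ as well.

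\textbf{Main obstacle.} The hard step is the converse: passing from ``the gcd over all $l,m,n$ is $1$'' to ``one triple $(l,m,n)$ already achieves gcd $1$''. The first thing I would try is to reduce modulo each prime $p\mid N_cN_s$. Over $\mathbb{F}_p$ the condition says the rows of $\mathbf{M}$ span $\mathbb{F}_p^2$. The minors $u_lN_s$ and $v_lN_c$ handle primes dividing only one of $N_c,N_s$ using a single row $l$. Primes dividing both need two rows $m,n$ with $p\nmid\det(\mathbf{P}_{mn})$. I would then try to coordinate these choices across all primes at once. The tool would be the closure of the difference set under sums, $\Delta_{ij}+\Delta_{jk}=\Delta_{ik}$, combined with a CRT argument that picks a single triple working for every prime simultaneously.

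If this simultaneous choice cannot be forced in general, I would state the theorem with the gcd taken over all indices $l,m,n$. The equivalence in Steps 1 to 3 holds verbatim in that form, and it reduces to the displayed single-triple condition whenever one triple suffices.
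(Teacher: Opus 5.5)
Your Steps 1--3 follow the paper's proof. The appendix builds the same stacked matrix $\mathbf{M}$, uses the Smith normal form to say its row lattice is $\mathbb{Z}^2$ iff the gcd of all $2\times 2$ minors is $1$, and lists the same four families of minors. The one difference is in Step 2. The paper only asserts that unique recovery is equivalent to reaching $[1,0]^T$ and $[0,1]^T$ in the integer span. Your character argument proves the necessity half: a proper sublattice $\mathcal{L}$ yields a nonzero kernel element of $\varphi$. So your version is the more complete one.

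Do not pursue the ``main obstacle'': the single-triple converse is false, so no CRT coordination can succeed. Here is a counterexample.
\begin{itemize}
\item Take $N_c=30$, $N_s=7$ and pilots $(0,0),(2,2),(5,5)$. The difference vectors are $\pm(t,t)$ with $t\in\{2,3,5\}$.
\item Every cross-determinant is $0$, and $\gcd(0,7t,30t,210)=t$. So no choice of $l,m,n$ makes the displayed gcd equal to $1$.
\item The gcd over all indices is $\gcd(14,21,35,60,90,150,210)=1$. Recovery is in fact unique: the congruences $30t\,q_\nu-7t\,q_\tau\equiv 0 \pmod{210}$ with $t=2$, reduced mod $7$, give $4q_\nu\equiv 0$, so $q_\nu\equiv 0 \pmod 7$. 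They then reduce to $t\,q_\tau\equiv 0\pmod{30}$ for $t=2,3,5$, which forces $q_\tau\equiv 0\pmod{30}$.
\end{itemize}
The per-prime witnesses sit on different difference vectors, which is why no single triple suffices.

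So the displayed condition, read as a single triple, is sufficient but not necessary. Your fallback is the correct statement: the gcd taken over all $l,m,n$, or equivalently over all indices of some subset. It is also exactly what the paper's appendix proves, since its final display takes the gcd over the union of all minors. Under that reading of the theorem, your proof is complete.
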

\begin{IEEEproof}
See Appendix~\ref{app:proof_thm1} for the proof.
\end{IEEEproof}
Theorem~\ref{theorem2} provides a necessary and sufficient condition for the congruence system of DD parameters to admit a unique solution. The condition essentially requires that the linear combination of pilot intervals is able to synthesize the finest symbol and subcarrier resolution grid, such that the sensing ambiguity is completely resolved.

\subsubsection{Comparison with Orthogonal Coprime}\label{compare1d2d}

As shown in Fig.~\ref{coprime}b, the orthogonal coprime design by \cite{Meiperiodiccoprime} provides a sufficient but not necessary coprime condition to resolve ambiguity.
Specifically, pilots are configured with coprime intervals across both subcarriers and symbols. Thus, the difference vectors lie on decoupled delay and Doppler axes, unnecessarily enforcing the unambiguous condition on two orthogonal dimensions. 
Formally, let $\mathcal{M} \subset \{0,1,\dots,N_c-1\}$ and $\mathcal{N} \subset \{0,1,\dots,N_s-1\}$ denote the index sets in subcarrier and symbol dimensions, respectively. The orthogonal coprime pilot pattern $\mathcal{P}_{\text{orthogonal}}$ can be expressed as the Cartesian product:  
\begin{equation}
\mathcal{M} \times \mathcal{N} = \{(m_p,n_p) \mid \forall m_p \in \mathcal{M}, n_p \in \mathcal{N}\}.
\end{equation}
Restricting pilot configuration to one axis (for example, $\forall i, v_i=0$) simplifies  the 2D coprime condition to a 1D congruence $\sum_i \alpha_i u_i = 1 \pmod{N_c}$, as the orthogonality constraint $\sum_i \alpha_i v_i = 0\pmod{N_s}$ is trivially satisfied. 

It can be observed that, compared with the 2D coprime condition in Theorem~\ref{theorem2}, the 1D coprime condition (e.g., $\sum_i \alpha_i u_i = 1 \pmod{N_c}$) is not necessary. This implies that the 2D coprime condition is more general and flexible, which leaves more freedom for pilot design to maximize sensing accuracy. 

\textbf{Remark 2:}
\textit{
The design of sensing pilots essentially involves allocating multiple comb-type pilot sequences across the entire frame. 
The difference between orthogonal and 2D coprime methods has been discussed above. We now focus on the difference between the CRT-based method and the two coprime methods. CRT-based methods \cite{liuLowOverhead,Silva2018} require different comb pilot sequences with coprime sampling intervals across the entire frame in either the time or the frequency domain, thereby satisfying the coprime condition derived from CRT to achieve unambiguous sensing. In contrast, the orthogonal coprime method and the proposed 2D coprime method design pilot patterns within the periodic resource blocks\cite{Meiperiodiccoprime}. This is equivalent to configuring multiple comb pilot sequences with identical time or frequency sampling intervals within the OFDM frame, where the relative positions among different combs must be designed to satisfy the aforementioned orthogonal or 2D coprime condition for unambiguous sensing.}

\subsection{An Insight about Sparse Sampling}\label{insight}
\begin{figure}[t]
\centering
\includegraphics[width=3.4in]{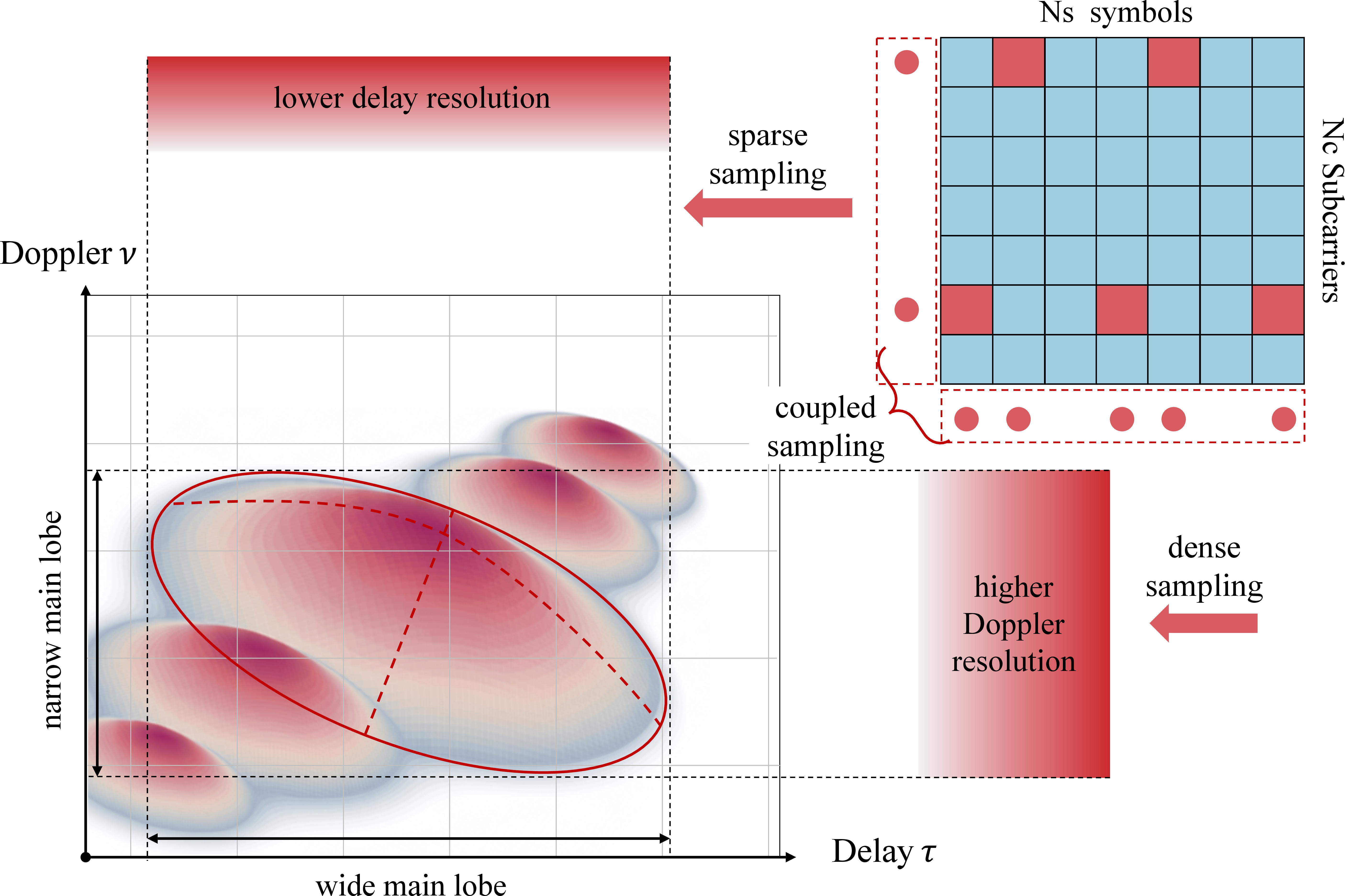}
\caption{The ambiguity function of sparse pilot in the DD domain. }
\label{sampling}
\end{figure}
Sparse sampling in DD domain inherently induces energy dispersion, which eventually causes the concerned sensing ambiguity and uncertainty. The energy dispersion in DD domain can be visualized by AF, which offers insights into the effects of sparse sampling \cite{book,singleblock}. In the following, we use AF to characterize the different pilot patterns' sensing capabilities in terms of sensing ambiguity and accuracy. 

On the one hand, sensing ambiguity can be characterized by the lobes in the AF. As shown in Fig.~\ref{sampling}, in addition to the desired main lobe corresponding to the ground truth target parameters, the multiple aliased lobes due to sparse sampling in the DD domain essentially lead to the sensing ambiguity. Since these lobes possess responses identical to the main lobe, the receiver is unable to distinguish among them.

On the other hand, the sensing accuracy can be characterized by the resolution of the AF \cite{radarAF}. For instance, denser sampling in the time domain yields a narrower main lobe of the AF along the Doppler axis, which leads to a finer resolution in the Doppler domain. 
Furthermore, the coupled sampling reshapes the lobe of the AF, as exemplified by the elliptical contour shown in Fig.~\ref{sampling}. The principal axes of the ellipse dictate the finest and coarsest sensing resolution in the DD domain. With 2D coprime sampling, the principal axis of the AF is in the delay-Doppler coupling ridge\cite{rihaczek1996principles}. Such sampling yields superior sensing accuracy for correlated DD parameters than the orthogonal coprime solution \cite{dataset}, whose principal axes of the AF are along the delay and Doppler axes.

\subsection{Sensing Accuracy Evaluation}
As shown above, the 2D coprime condition guarantees the unique main lobe in AF and leaves more freedom to optimize sensing accuracy than existing solutions.  
However, optimizing the sensing accuracy $\mathcal{U}(\mathcal{P})$ (i.e., the SMI within the unambiguous sensing region) in \eqref{problemformulation} is intractable due to the lack of a closed-form expression. Therefore, we resort to the BCRB as a tractable surrogate for SMI. An AGIR algorithm is then proposed to optimize this surrogate under the 2D coprime condition.

We first establish the relationship between BCRB and SMI.  Assuming that $\boldsymbol{\theta}$ follows the joint real Gaussian distribution with covariance matrix $\mathbf{\Sigma}_\theta\in \mathbb{R}^{2 \times 2}$. The relation between the BCRB matrix $\Psi_\text{BCRB}(\mathcal{P})\in \mathbb{R}^{2 \times 2}$ of the pilot pattern $\mathcal{P}$ and the SMI between $\boldsymbol{\theta}$ and $\mathbf{Y}_\text{RS}$ can be given by \cite{xie2024}: 
\begin{equation}\label{lowerbound}
I(\mathbf{Y}_\text{RS}; \boldsymbol{\theta}) \ge \frac{1}{2}(\log \det(\mathbf{\Sigma}_\theta) - \log \det (\Psi_\text{BCRB}(\mathcal{P}))), 
\end{equation}
which provides an asymptotically tight lower bound as the SNR increases\cite{BayesianFIM}. Consequently, the log-determinant form of the BCRB can serve as an effective surrogate for $\mathcal{U}(\mathcal{P})$.

Next, we calculate the BCRB. Given the prior $\mathbf{\Sigma}_\theta$, the posterior BCRB $\Psi_\text{BCRB}(\mathcal{P})$ in \eqref{lowerbound}, which provides a lower bound on the estimation error for $\boldsymbol{\theta}$, is defined as the inverse of the Bayesian Fisher Information Matrix (FIM) $\mathbf{J}\in \mathbb{R}^{2 \times 2}$. In particular, the Bayesian FIM $\mathbf{J}$ for $\boldsymbol{\theta}$ is given by\cite{vantrees2007bayesian}:  
\begin{equation}\label{J}
\begin{aligned}
 &\mathbf{J}  \triangleq \mathbb{E}_{\boldsymbol{\theta}} \left(
\frac{\partial \log p(\boldsymbol{\theta})}{\partial \boldsymbol{\theta}}
\frac{\partial \log p(\boldsymbol{\theta})}{\partial \boldsymbol{\theta}^{T}}
\right) 
\\  + &\mathbb{E}_{\mathbf{Y}_\text{RS},\boldsymbol{\theta}} \left(
\left.
\frac{\partial \log p(\mathbf{Y}_\text{RS}|\mathcal{P},\boldsymbol{\theta})}{\partial \boldsymbol{\theta}}
\frac{\partial \log p(\mathbf{Y}_\text{RS}|\mathcal{P},\boldsymbol{\theta})}{\partial \boldsymbol{\theta}^{T}}
\right| \mathcal{P}
\right). 
\end{aligned}
\end{equation}
The BCRB can thus be calculated as:   
\begin{equation}\label{BCRB}
\Psi_\text{BCRB}(\mathcal{P}) =\mathbf{J}^{-1} = \left( \mathbf{\Sigma}_\theta^{-1} + \mathbf J_{\mathbf{Y}_\text{RS}|\mathcal{P}}(\mathcal{P}) \right)^{-1}, 
\end{equation}
where the inverse of covariance matrix $\mathbf{\Sigma}_\theta$ is equal to the first term in \eqref{J}, and $\mathbf J_{\mathbf{Y}_\text{RS}|\mathcal{P}}(\mathcal{P})\in \mathbb{R}^{2 \times 2}$ is the likelihood FIM for observation $\mathbf{Y}_\text{RS}$, corresponding to the second term in \eqref{J}.

To concisely calculate the $\mathbf J_{\mathbf{Y}_\text{RS}|\mathcal{P}}(\mathcal{P})$, only the received pilot observations are retained from $\mathbf{Y}_\text{RS}$. 
According to \eqref{eq:rx_signal_re1}, the effective observation can be vectorized as: 
\begin{equation}
\mathbf y(\mathcal P)
=
\xi\boldsymbol{\mu}(\mathcal P)+\mathbf w
\in\mathbb C^{P\times1},
\end{equation}
where the propagation coefficient is modeled as a zero-mean circularly symmetric complex Gaussian random variable  
$\xi\sim\mathcal{CN}(0,\sigma_\xi^2)$ that is independent of $\boldsymbol\theta$, the $p$-th entry of $\boldsymbol{\mu}(\mathcal P)$ is given by $\mathbf{\Phi}_{\tau, \nu} (m_p, n_p)\mathbf{X}_\text{RS} (m_p,n_p)$, and $\mathbf w$ is the additive noise following $\mathcal{CN}(0,\sigma^2 \mathbf I_P)$. 
The conditional distribution $\mathbf y(\mathcal P)\,|\,\boldsymbol\theta$ follows a complex Gaussian distribution 
$
\mathcal{CN}
\!\left(
\mathbf0,
\mathbf R
\right) $ with covariance matrix: 
\begin{equation}
\mathbf R
=
\sigma_\xi^2
\boldsymbol\mu(\mathcal P)
\boldsymbol\mu^H(\mathcal P)
+
\sigma^2\mathbf I, 
\end{equation}
which depends on the given $\boldsymbol\theta$ only through
$\boldsymbol\mu(\mathcal P)$.

Since the marginalized observation is zero mean, the mean-dependent term in the Slepian--Bangs formula vanishes\cite{kay}, and the equivalent likelihood FIM is thus given by:
\begin{equation}
\label{slep}
[\mathbf J_{\rm eq}(\mathcal P)]_{i,j}
=
{\rm tr}
\!\left(
\mathbf R^{-1}
\frac{\partial\mathbf R}{\partial\theta_i}
\mathbf R^{-1}
\frac{\partial\mathbf R}{\partial\theta_j}
\right),
\end{equation}
where $\theta_1=\tau$, $\theta_2=\nu$, and
\begin{equation}
\label{grad}
\frac{\partial\mathbf R}{\partial\theta_i}
=
\sigma_\xi^2
\left(
\mathbf d_i
\boldsymbol\mu^H
+
\boldsymbol\mu
\mathbf d_i^H
\right). 
\end{equation} 
Here, $\mathbf d_i$ is the $i$-th column of the Jacobian matrix $\mathbf D
=
\left[
\mathbf d_1,\,
\mathbf d_2
\right]
=
\left[
\frac{\partial\boldsymbol\mu}{\partial\tau},
\frac{\partial\boldsymbol\mu}{\partial\nu}
\right]$, whose columns are given by: 
\begin{equation}
    \begin{cases}
&\mathbf{d}_1
=
-j2\pi\Delta f\,
\operatorname{diag}
\left(
\left\{
m_p+b_fN_c
\right\}_{p=1}^{P}
\right)
\boldsymbol{\mu},
\\
&\mathbf{d}_2
=
j2\pi T_s\,
\operatorname{diag}
\left(
\left\{
n_p+b_tN_s
\right\}_{p=1}^{P}
\right)
\boldsymbol{\mu}.
    \end{cases}
\end{equation}
Using the Woodbury matrix identity, the inverse covariance matrix is obtained as:
\begin{equation}\label{inv}
\mathbf R^{-1}
=
\frac1{\sigma^2}
\left(
\mathbf I
-
\frac{\sigma_\xi^2}
{\sigma^2+\sigma_\xi^2P}
\boldsymbol\mu
\boldsymbol\mu^H
\right). 
\end{equation}
After substituting \eqref{grad} and \eqref{inv} into \eqref{slep} and simplifying the trace terms, the $(i,j)$-th entry of $\mathbf J_{\rm eq}(\mathcal P)$ is given by: 
\begin{equation}
[\mathbf J_{\rm eq}]_{i,j}
=
\frac{2\sigma_\xi^4P}
{\sigma^2(\sigma^2+\sigma_\xi^2P)}
{\rm Re}
\left\{
\mathbf d_i^H\mathbf d_j
-
\frac{
\mathbf d_i^H
\boldsymbol\mu
\boldsymbol\mu^H
\mathbf d_j
}
{\|\boldsymbol\mu\|^2}
\right\}, 
\label{eq:Jij_new}
\end{equation}
with $\mathbf{d}_{1}^H \mathbf{d}_{1} = 4\pi^2 \Delta f^2 \sum_{p=1}^{P} (m_p + b_f N_c)^2$ and $\boldsymbol{\mu}(\mathcal{P})^H \mathbf{d}_{1}=-j 2\pi \Delta f \sum_{p=1}^{P} (m_p + b_f N_c)$. The remaining inner products can be obtained in the same manner.

The inner products in \eqref{eq:Jij_new} depend only on the centered second-order statistics of the intra-block pilot indices and is independent of $\boldsymbol{\theta}$. Hence, averaging it over the prior distribution does not change the result.

\subsection{Sensing Accuracy Optimization}

Finally, the optimization problem in \eqref{problemformulation} can be reformulated as the BCRB optimization problem subject to pilot overhead and 2D coprime constraints:  
\begin{equation}\label{eq:optimizationproblem}
\begin{aligned}
    \max_{\mathcal{P}}\quad & \mathcal{\bar{U}}(\mathcal{P}) = \log \det \left( \mathbf{\Sigma}_\theta^{-1} + \mathbf J_\text{eq}(\mathcal{P}) \right), \\
    \text{s.t.} & \quad \mathcal{P} \in \Omega , \\
    & \quad P = N_{\text{RS}}. 
\end{aligned}
\end{equation}
Due to the coprime condition and the integer constraints on block scales and guard intervals\cite{separateanalysis}, the optimization problem \eqref{eq:optimizationproblem} constitutes a non-convex integer programming problem.

Inspired by the continuous manifold projection\cite{Bazzi2025} and iterative refinement strategies\cite{Stoica2009}, we propose an AGIR algorithm to find the optimal pilot pattern. The details of the AGIR algorithm are given in Algorithm~\ref{alg:agir}. By relaxing the discrete constraints into the continuous space, the algorithm calculates the continuous gradient of the objective. This allows the optimizer to capture the underlying trend of the manifold and rapidly guide the pilot pattern toward the optimal neighborhood.

\begin{algorithm}[tbp] 
\caption{Adam-Guided Iterative Refinement (AGIR)}
\label{alg:agir}
\begin{algorithmic}[0]
\Require Grid size $N_c, N_s$; Pilot number $P$; Prior covariance $\boldsymbol{\Sigma}_{\theta}$; Max epochs $T_\text{epoch}$; Gradient steps $T_\text{grad}$; Adam hyperparameters $\beta_1, \beta_2, \epsilon, \eta$.
\Statex \textbf{Initialization with valid set $\Omega_{\text{valid}}$:}
\State Randomly initial pattern $\mathcal{P}^{(0)} = \{(m_p, n_p)\}_{p=1}^{P}$ with distinct values that satisfy the 2D coprime $\gcd(\mathcal{P}^{(0)}) = 1$ and guard interval constraint $|\Delta  m_{ij}| +|\Delta  n_{ij}| > D_\text{gurad}, \forall i \neq j$.
\State Initialize Adam moments: $\mathbf{m} \gets \mathbf{0}, \mathbf{v} \gets \mathbf{0}, t \gets 0$.
    \For{epoch $k = 1$ \textbf{to} $T_\text{epoch}$}
        \Statex \textit{Phase 1:  Adam Iterative Exploration}
        \State Discrete to continuous pattern: $\mathcal{Q}^{(k)} \gets \mathcal{P}^{(k-1)}$.
        \For{$j = 1$ \textbf{to} $T_\text{grad}$}
            \State $t \gets t + 1$
            \State Compute objective function gradient: 
                   $\mathbf{g}_t \gets \frac{\partial \mathcal{\bar{U}}(\mathcal{\mathcal{Q}})}{\partial \mathcal{Q}}$.
            \State Update first moment: 
                   $\mathbf{m} \gets \beta_1 \mathbf{m} + (1 - \beta_1) \mathbf{g}_t$.
            \State Update second moment: 
                   $\mathbf{v} \gets \beta_2 \mathbf{v} + (1 - \beta_2) \mathbf{g}_t^2$.
            \State Compute: 
                   $\hat{\mathbf{m}} \gets \mathbf{m} / (1 - \beta_1^t)$, $\hat{\mathbf{v}} \gets \mathbf{v} / (1 - \beta_2^t)$.
            \State  Adaptive step-size update: 
                   $\mathcal{Q}^{(k)} \gets \mathcal{Q}^{(k)} + \eta \cdot \frac{\hat{\mathbf{m}}}{\sqrt{\hat{\mathbf{v}}} + \epsilon}$.
            \State Project: 
                   $\mathcal{Q}^{(k)} \gets \operatorname{Clip}(\mathcal{Q}^{(k)}, [0, N_c-1] \times [0, N_s-1])$.
        \EndFor
        \Statex \textit{Phase 2: Nearest Projection and Discrete Refinement}
        \State Project continuous $\mathcal{Q}^{(k)}$ to the nearest distinct integers satisfying constrains: $\mathcal{P}_{\text{temp}} \gets \arg \min_{\mathcal{P} \in \Omega_{\text{valid}}} \|\mathcal{P} - \mathcal{Q}^{(k)}\|^2$.
        \For{$p = 1$ \textbf{to} $P$}
            \State Fix other pilots: $\mathcal{P}_{\text{temp}} \setminus \{(m_p, n_p)\}$.
            \State Search optimal $(m_p, n_p)^\star$ over local integer neighborhood to maximize the objective function within valid set.
            \State Update $(m_p, n_p) \gets (m_p, n_p)^\star$.
        \EndFor
        \State Set $\mathcal{P}^{(k)} \gets \mathcal{P}_{\text{temp}}$.
    \EndFor
    \State \Return Optimal coprime pilot pattern $\mathcal{P}^* \gets \mathcal{P}^{(k)}$.
\end{algorithmic}
\end{algorithm}

Fig.~\ref{AGIR} illustrates the zigzag iteration process of AGIR, reflecting the alternating nature of the proposed method. Specifically, in each epoch, the algorithm temporarily relaxes the integer and coprime constraints to continuously explore along the adaptive gradient ascent direction (red trajectory). After exploration, the solution is discretized and projected back to the nearest lattices $\Omega_{\text{valid}}$ that satisfy the coprime and guard interval constraints, causing a fallback from its continuous optimal solution to a suboptimal discrete point. Using the projected point as a seed, AGIR performs discrete refinement using a restricted descent, recovering performance losses by fine-tuning pilot indices within local integer neighborhoods. This mechanism ensures that the discrete search is not random but warm-started by gradient information, which enables the algorithm to converge to a high-quality solution.

Next, we compare the complexity of the proposed AGIR algorithm with that of Exhaustive Search (ES) and Greedy Search (GS). ES requires an exponential complexity of $\mathcal{O}((N_c N_s)^{P})$, which is intractable for large resource blocks. In contrast, GS operates with a highly efficient linear complexity of $\mathcal{O}(P \cdot (N_c N_s))$ but often falls into local optima without considering the joint coupling among pilots. The proposed AGIR alternates between continuous Adam exploration and discrete refinement for a few epochs $T_\text{epoch}$. Its overall complexity is around $\mathcal{O}(T_\text{epoch} \cdot P \cdot (N_c N_s))$. Therefore, AGIR achieves the same linear complexity as GS, while effectively avoiding suboptimal traps through continuous gradient guidance.

\begin{figure}[t] 
\centering
\includegraphics[width=3.3in]{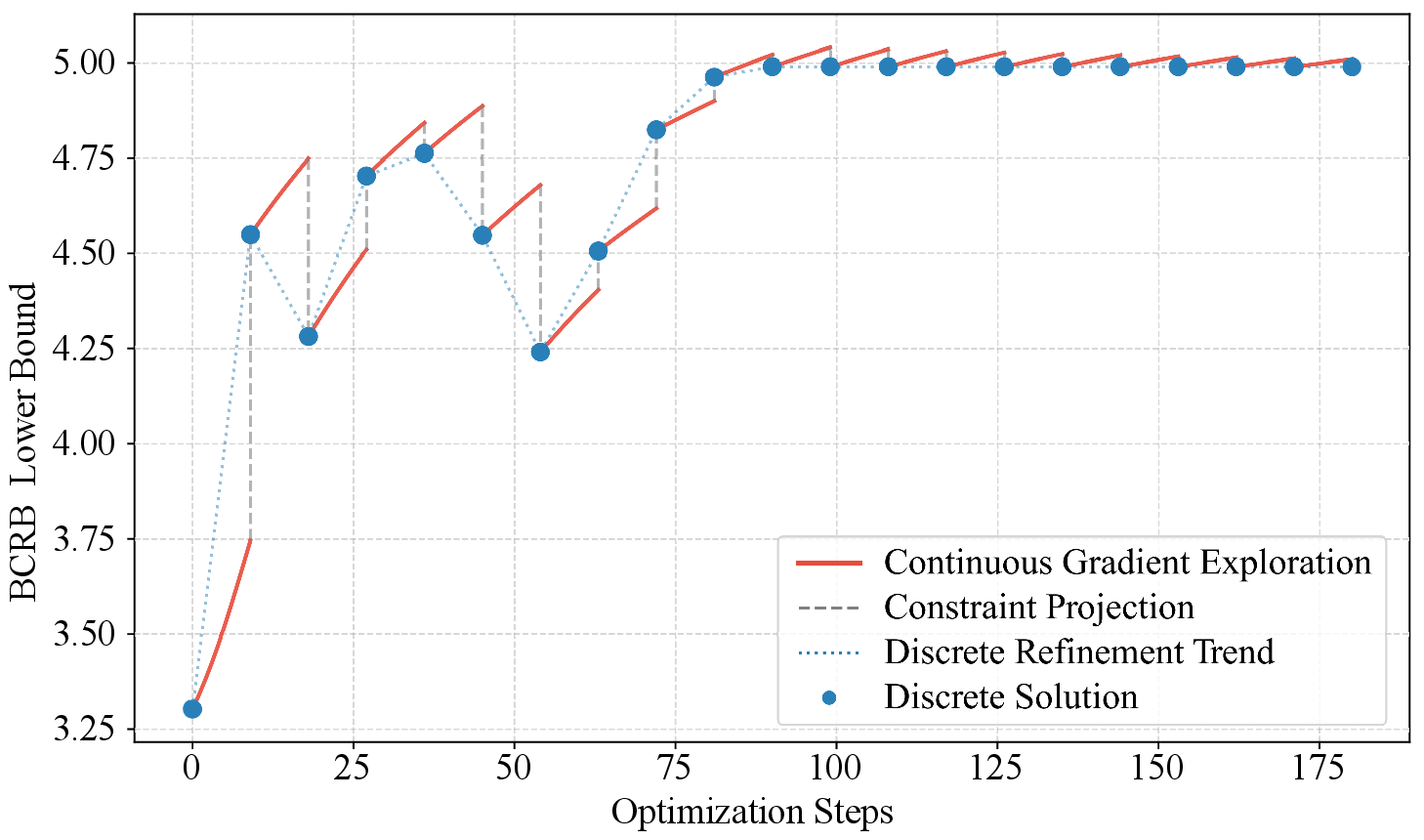}
\caption{Illustration of the zig-zag climbing behavior of the AGIR algorithm during iterative optimization.}
\label{AGIR}
\end{figure}

\section{Simulation results}

\begin{figure*}[!tbp] 
\centering
\includegraphics[width=7in]{figs/PSF}
\caption{Pilot patterns and corresponding AF in DD domain: 
(a) Full Sampling, (b) Uniform Sampling, (c) Orthogonal Coprime, (d) 2D Coprime and (e) 2D Non-Coprime. Pattern (b), (c), (d), (e) employ the same number of pilots $N_\text{RS}=4$.}\label{figPSF}
\end{figure*}

In this section, numerical results are presented to demonstrate the superiority of the proposed framework over existing designs. The advantages mainly arise from the joint optimization of the unambiguous sensing region and sensing accuracy.

\subsection{Comparison of AFs for different patterns}

We first utilize AF to illustrate the sensing performance of five different pilot patterns. The specific patterns and the corresponding AFs are illustrated in Fig.~\ref{figPSF}.


(a) Full Sampling pattern utilizes all resources for sensing. It achieves the ideal AF response with a sharp single main lobe for high DD resolution without side lobes, at the cost of very high overhead.

(b) Uniform Sampling pattern adopts pilots with uniform intervals along both subcarriers and symbols. While this pattern maintains low overhead, the sub-Nyquist sampling induces severe aliasing and restricts the maximum unambiguous regions. This manifests as multiple identical lobes in the AF, making it impossible for the receiver to identify the true target parameters.

(c) Orthogonal Coprime pattern employs coprime intervals along both subcarriers and symbols. As depicted in Fig.~\ref{figPSF}c, as the side lobes are lower than the main lobe in the AF, the receiver can unambiguously distinguish the true target parameters. Moreover, the principal axes of these lobes are parallel to the delay and Doppler axes, yielding the isotropic sensing resolutions of the orthogonal pilot pattern. Consequently, this pattern is only suitable for targets with independent DD parameters.

(d) 2D Coprime pattern utilizes the pilot indices satisfying the 2D Coprime condition given in Theorem~\ref{theorem2}. Similar to Orthogonal Coprime, it ensures unambiguous sensing due to the difference between the main and side lobes. More importantly, the coupled time-frequency sampling enhances the resolution of the AF along the coupling ridge of delay and Doppler, which enables higher sensing accuracy for DD parameters with joint distributions.

(e) 2D Non-Coprime pattern also employs coupled sampling like 2D Coprime pattern and exhibits a superior sensing resolution along the coupling ridge of delay and Doppler. However, this pattern does not satisfy the 2D Coprime condition and causes sensing ambiguity, which can be explained by the two identical lobes in the AF.

\subsection{SMI Performance Comparison}

In this subsection, we evaluate the SMI achieved by different pilot patterns with respect to SNR, pilot overhead, and resource block size. 
The baseline patterns include Full Sampling, Uniform Sampling, Orthogonal Coprime, 2D Coprime, and D-optimal (which optimizes the log-determinant BCRB surrogate for maximum sensing accuracy without satisfying the coprime condition). The pilot overhead is set to $N_{\text{RS}}=4$ within a resource block of size $8 \times 8$. 

\textbf{Remark 3:}
\textit{
The SMI in \eqref{smi1} lacks a closed-form expression. To address this, we propose a numerical method to evaluate SMI using discrete Monte Carlo simulation. The detailed numerical procedure is given in Appendix~\ref{app:smi_computation}. 
}

\subsubsection{SMI versus SNR}

\begin{figure}[tbp]
\centering
\subfloat[]{\includegraphics[width=1.714in]{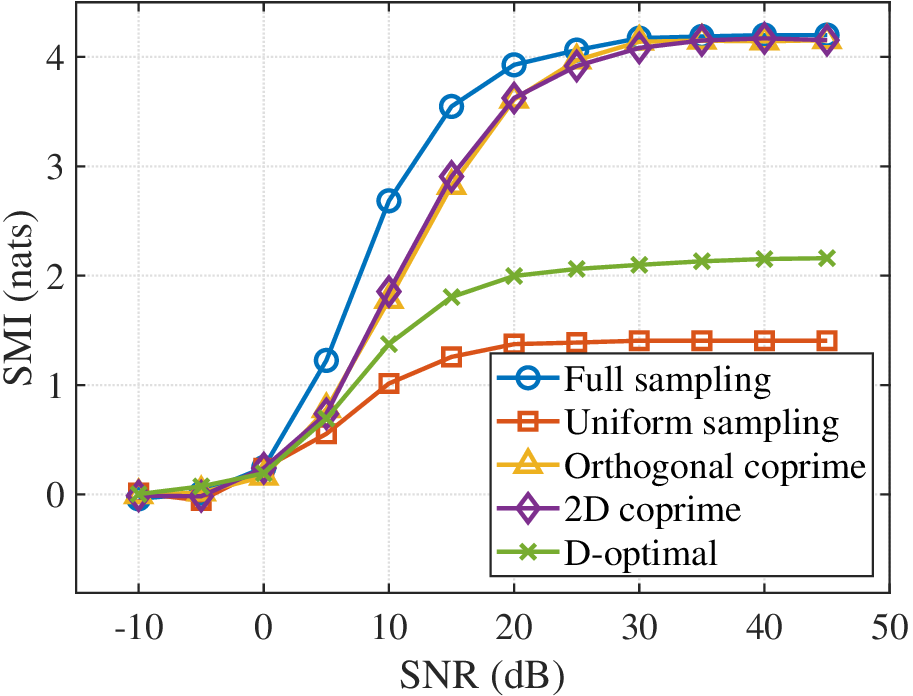}\label{fig:sec1}}
\hspace{0.005in}
\subfloat[]{\includegraphics[width=1.714in]{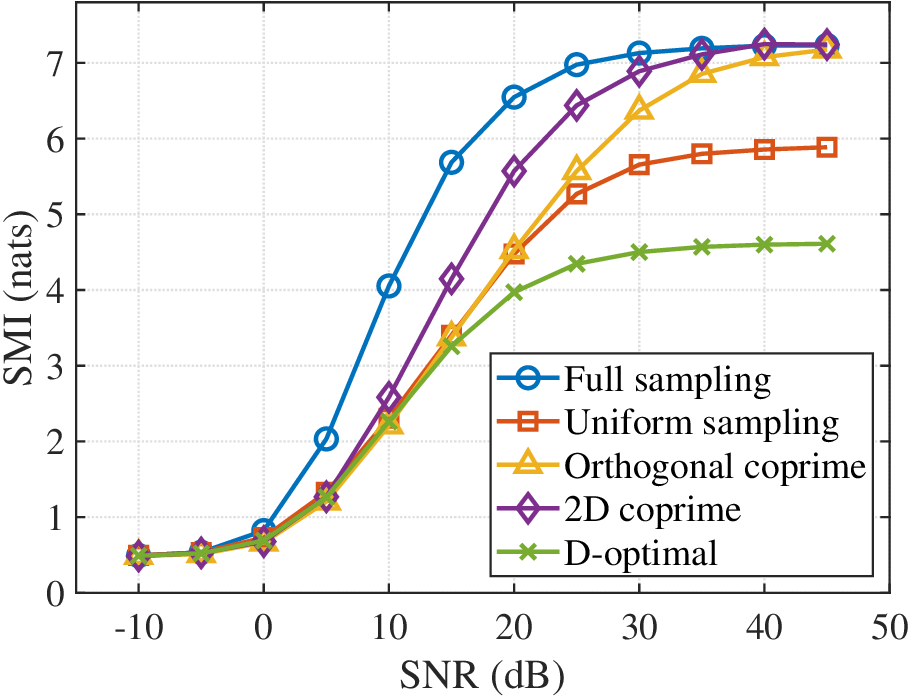}\label{fig:sec2}}
\caption{SMI versus SNR with different pilot patterns for DD parameters with (a) independent and uniform distribution and (b) joint Gaussian distribution.}
\end{figure}

We first consider the case where the DD parameters are independent and uniformly distributed within the maximum unambiguous region, and the results are shown in Fig.~\ref{fig:sec1}. It can be observed that Full Sampling achieves the optimal sensing region and accuracy, thereby establishing the upper bound of the SMI. The Orthogonal Coprime and 2D Coprime patterns achieve identical SMI values, due to the independent distributions in DD domain. At moderate SNR, the SMI of these two coprime patterns is lower than that of Full Sampling due to worse sensing accuracy. However, in the high-SNR regime, they are able to achieve the optimal SMI. 
Conversely, the D-optimal and Uniform Sampling patterns suffer severe losses. In particular, the SMI of the Uniform Sampling exhibits a loss of $\ln(4 \times 4) \approx 2.77$ nats in SMI. This degradation is because  Uniform Sampling shrinks the unambiguous sensing region to $1/4$ of the maximum sensing region for both delay and Doppler parameters.

Fig.~\ref{fig:sec2} illustrates the results where the DD parameters follow a joint Gaussian distribution. It can be observed that the proposed 2D Coprime pattern outperforms the Orthogonal Coprime pattern. This superiority is due to its ability to leverage coupled time-frequency sampling to obtain higher sensing accuracy for DD parameters with a joint distribution, as discussed in Sec.~\ref{insight}. On the other hand, although the performance gaps between different patterns reduce with joint distributions between the DD parameters, the SMI losses of Uniform Sampling and D-optimal patterns persist due to the unresolved sensing ambiguity.

The experiment results also reveal the underlying relationship between the sensing region and accuracy. In particular, pilot patterns that violate the coprime constraint (Uniform Sampling and the D-optimal pattern) consistently suffer from a severe SMI loss. This phenomenon validates the analysis in Sec.~\ref{secproblemformulation}, i.e., ambiguity is the dominant factor in maximizing SMI, as it sets the upper limit of SMI. 

\textbf{Remark 4:}
\textit{
We note that there exists a trade-off between the unambiguous sensing region and sensing accuracy. In a low SNR region, sacrificing the sensing region to slightly enhance sensing accuracy might yield a marginally higher SMI. However, under such severe noise conditions, the extremely high uncertainty also pushes the SMI to be very small, rendering the potential improvement in SMI marginal. Therefore, we ignore this case and maximize the SMI by first resolving the sensing ambiguity and then optimizing the sensing accuracy.
}

\begin{figure}[t] 
\centering
\includegraphics[width=3.2in]{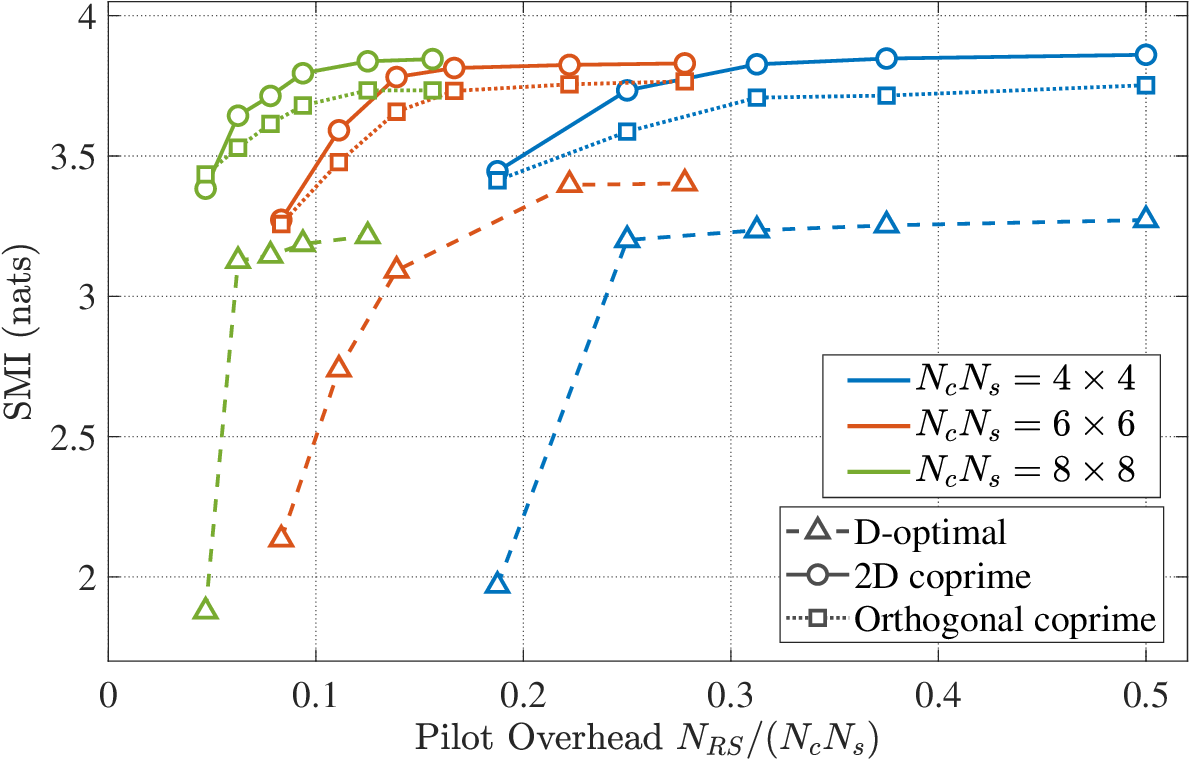}
\caption{SMI versus pilot overhead with different pilot patterns under varying resource block scales.}
\label{fig:SMI_Overhead}
\end{figure}

\subsubsection{SMI versus Sensing Overhead}

Fig.~\ref{fig:SMI_Overhead} illustrates the SMI as a function of the normalized pilot overhead, defined as $N_{\text{RS}}/(N_c N_s)$, under various resource block scales ($4 \times 4$, $6 \times 6$, and $8 \times 8$). We can observe that the SMI of all patterns increases with the pilot overhead and approaches the upper limit. The proposed 2D Coprime pattern consistently achieves the highest SMI across all overheads. The Orthogonal Coprime pattern remains suboptimal because its rigid structural requirement leaves little freedom to optimize sensing accuracy. Furthermore, the D-optimal pattern, which maximizes sensing accuracy without satisfying the coprime condition, exhibits a much lower SMI due to its failure to resolve sensing ambiguity. 

Notably, patterns with larger resource blocks require a considerably lower pilot overhead to obtain the maximum SMI. This occurs because larger $N_c$ and $N_s$ significantly enlarge the design space. However, unlocking this performance gain comes at the expense of increased computational complexity.

\subsection{Sensing Accuracy Comparison}

In this subsection, we compare the sensing accuracy achieved by different pilot patterns.

\subsubsection{Sensing Accuracy versus SNR}

Fig.~\ref{fig:mse} shows the MSE for delay and Doppler estimation by different pilot patterns versus SNR. The DD parameters are estimated using the maximum likelihood estimator. Note that, for the pilot pattern that cannot guarantee unambiguous sensing, e.g., D-optimal, we only evaluate the local MSE within the unambiguous region. 

It can be observed that the proposed 2D Coprime pattern achieves significantly lower MSE than Orthogonal Coprime, demonstrating the advantage of time-frequency sampling in improving sensing accuracy.
More importantly, the MSE of the 2D Coprime pattern closely approaches that of the D-optimal pattern, which achieves the optimal sensing accuracy (but with sensing ambiguity). This indicates that the 2D Coprime condition not only ensures the maximum unambiguous sensing region but also provides sufficient structural flexibility for pilot design to achieve near-optimal sensing accuracy.

\begin{figure}[tbp]
\centering
\subfloat[]{\includegraphics[width=1.714in]{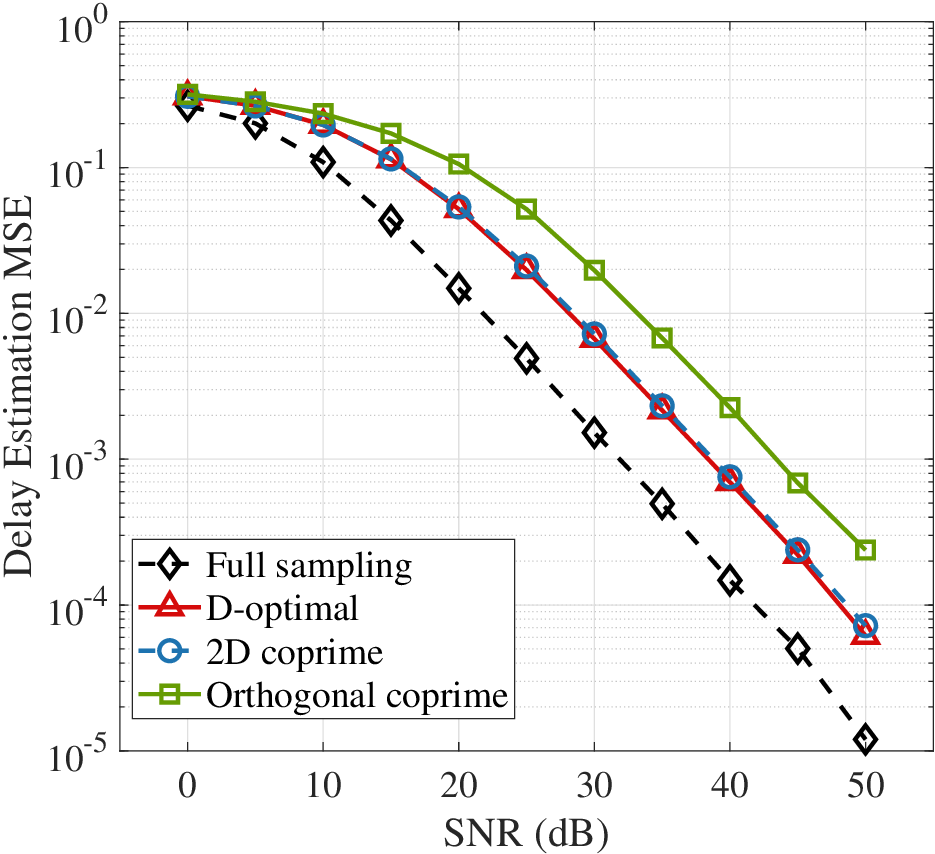}\label{fig:tau}}
\hspace{0.005in}
\subfloat[]{\includegraphics[width=1.714in]{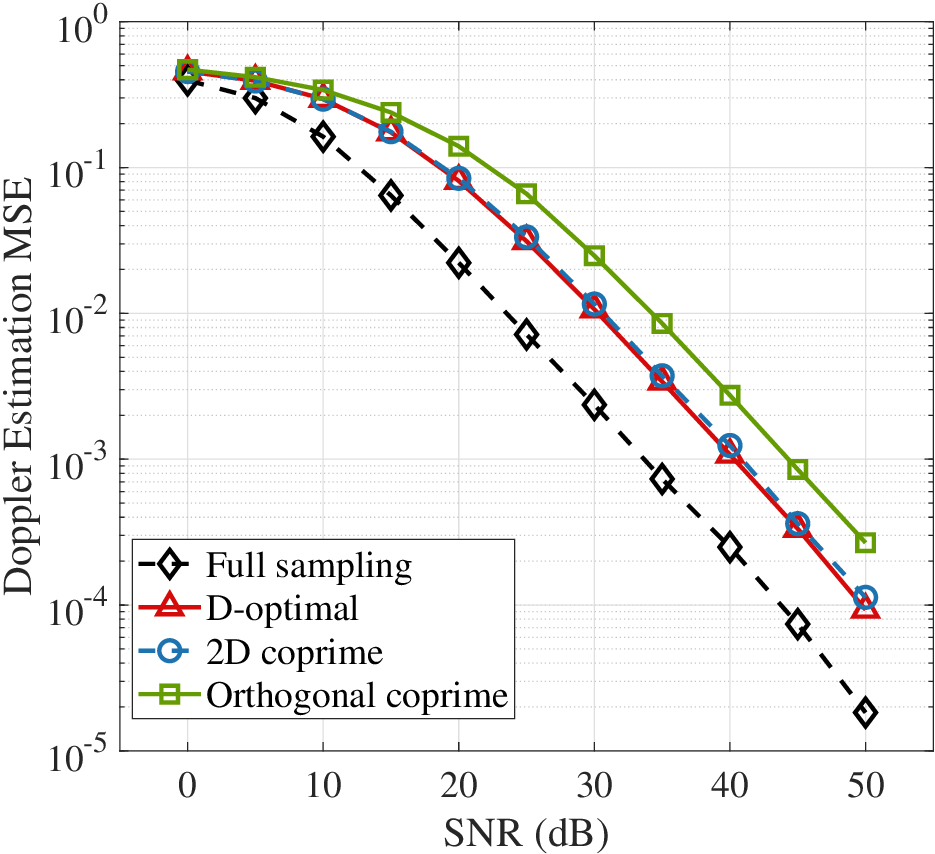}\label{fig:nu}}
\caption{The normalized MSE for (a) delay (b) Doppler parameter estimation as a function of SNR with different pilot patterns.}
\label{fig:mse}
\end{figure}

\subsubsection{The Structural Flexibility of 2D Coprime}

To illustrate the structural flexibility of the proposed 2D Coprime condition, we evaluate the adaptability of the pilot patterns with varying DD parameter distributions. For that purpose, we first define the joint distribution of DD parameters $\boldsymbol{\theta}$. Specifically, its covariance $\boldsymbol{\Sigma}_\theta$ can be characterized as an uncertainty ellipse, which is defined by the condition number $\kappa$ and the rotation angle $\phi$. The condition number $\kappa$ is defined as the ratio between the largest eigenvalue and the smallest eigenvalue, representing the relative uncertainty along the two principal axis directions of the ellipse in DD domain. Meanwhile, $\phi$ dictates the orientation of the principal eigenvectors that reflects the correlation of DD parameters along the two principal axis directions. By sweeping the condition number $\kappa$ and the rotation angle $\phi$, we demonstrate the flexibility of the proposed 2D Coprime condition in adaptively optimizing the pilot's sensing accuracy for DD parameters with joint distributions.

Fig.~\ref{fig:Ellipse} visualizes the alignment between the equivalent FIM $\mathbf J_{\text{eq}}$ and the covariance $\boldsymbol{\Sigma}_\theta$. Specifically, we gradually increase the condition number $\kappa$ from $1.0$ (isotropic) to $64.0$ (highly anisotropic). Due to its rigid structure, the Orthogonal Coprime pattern cannot adapt to the correlated DD parameters. Consequently, its FIM remains isotropic, resulting in a severe failure to accurately sense the DD parameters. The D-optimal pattern, unburdened by the coprime constraints, best tracks the variation of $\boldsymbol{\Sigma}_\theta$ and achieves the best alignment. Remarkably, the proposed 2D Coprime pattern demonstrates exceptional flexibility besides fulfilling the coprime condition, achieving comparable performance to the D-optimal pattern. 

\begin{figure}[tbp] 
\leftskip=-0.3pt
\includegraphics[width=3.5in]{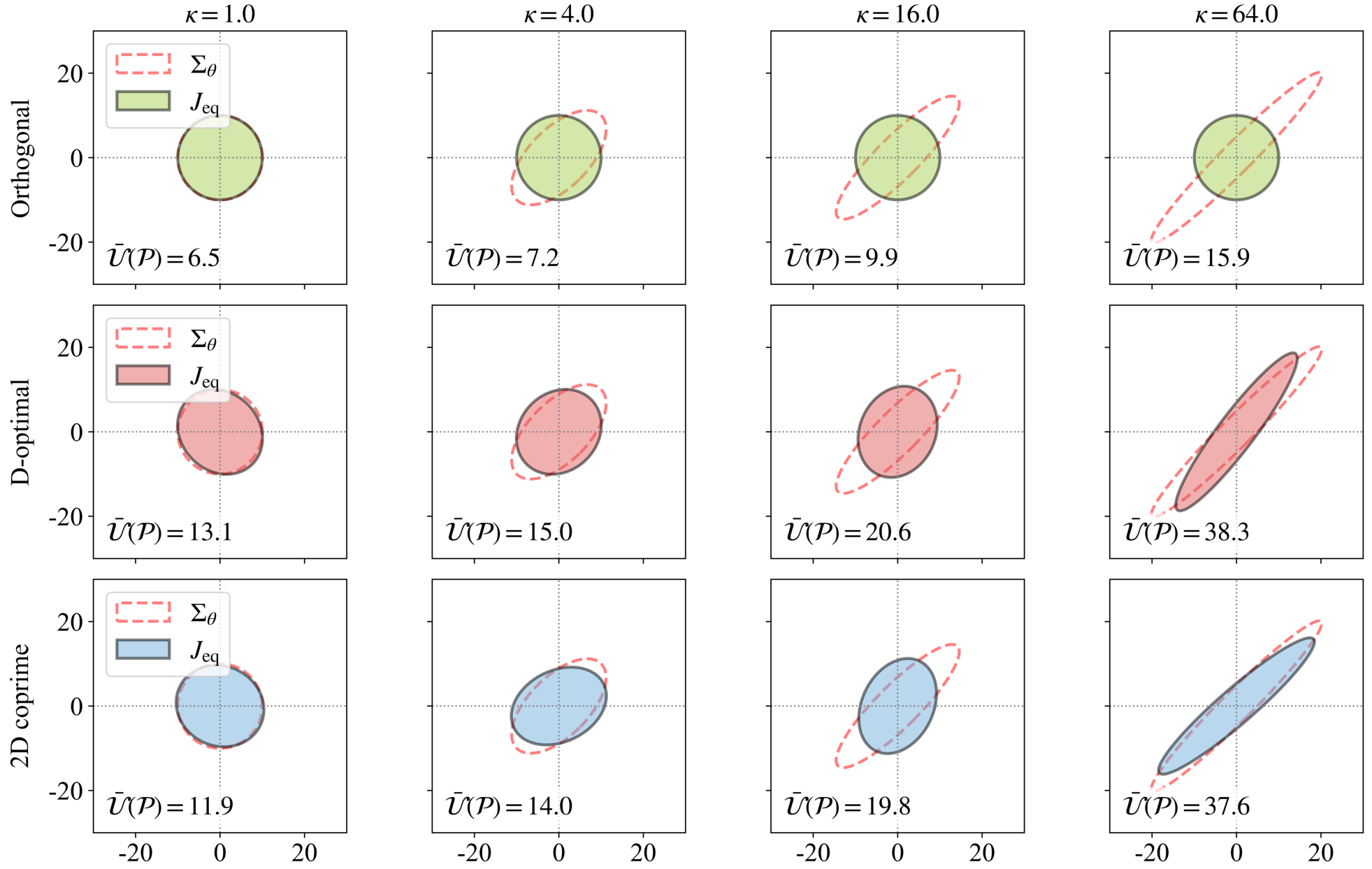}
\caption{Visualization of the equivalent FIM $\mathbf{J}_{\text{eq}}$ of different pilot patterns aligning with $\boldsymbol{\Sigma}_\theta$ with different condition numbers $\kappa$.}
\label{fig:Ellipse}
\end{figure}

Next, we evaluate the robustness of sensing accuracy under various covariances $\boldsymbol{\Sigma}_\theta$ to demonstrate the structural flexibility of the 2D Coprime condition. In Fig.~\ref{fig:bounds}a, we first demonstrate the robustness of different patterns by sweeping the condition number $\kappa$. The proposed 2D Coprime pattern consistently approaches the optimal accuracy of the D-optimal pattern across the entire span of $\kappa$. In contrast, the Orthogonal Coprime pattern exhibits a substantial MSE loss. This is due to its rigid orthogonal structure that limits the freedom for optimizing sensing accuracy.

Fig.~\ref{fig:bounds}b illustrates the robustness of patterns by sweeping the rotation angle $\phi$. As the angle rotates, the MSE curves exhibit periodic fluctuations. Notably, at certain mismatch angles (e.g., around $90^\circ\sim 150^\circ$), all patterns suffer from severe performance degradation. This occurs because the shape of the resource block inherently creates sensing blind spots at specific sensing directions in DD domain. The 2D Coprime pattern maintains a robust trajectory that approaches the D-optimal limit across various rotation angles.
In summary, the 2D Coprime pattern offers more freedom than the Orthogonal Coprime pattern to optimize sensing accuracy under various DD distributions. In other words, the 2D Coprime condition grants the system the essential freedom to achieve near-optimal sensing accuracy while maintaining the maximum unambiguous sensing region.

\section{Conclusion}
This paper proposed a coprime pilot design framework for DD parameter sensing in OFDM-ISAC systems. Based on the generalized Bézout identity, we first derived the 2D Coprime condition for achieving the maximum unambiguous sensing region. As far as the authors know, this is the first necessary and sufficient condition for unambiguous DD sensing in the literature. With the proposed condition, sensing accuracy was further optimized by leveraging the BCRB surrogate via the proposed AGIR algorithm. Numerical results demonstrated that the proposed pilot patterns outperform existing designs in terms of unambiguous sensing region and sensing accuracy. The superiority stems from the structural flexibility of the proposed condition to optimize sensing accuracy for different DD parameter distributions, where the maximum unambiguous sensing region is strictly guaranteed. Future work includes the extension of the proposed pilot design approach to other ISAC applications, such as semantic sensing.

\begin{figure}[tbp] 
\leftskip=-0.pt
\includegraphics[width=3.5in]{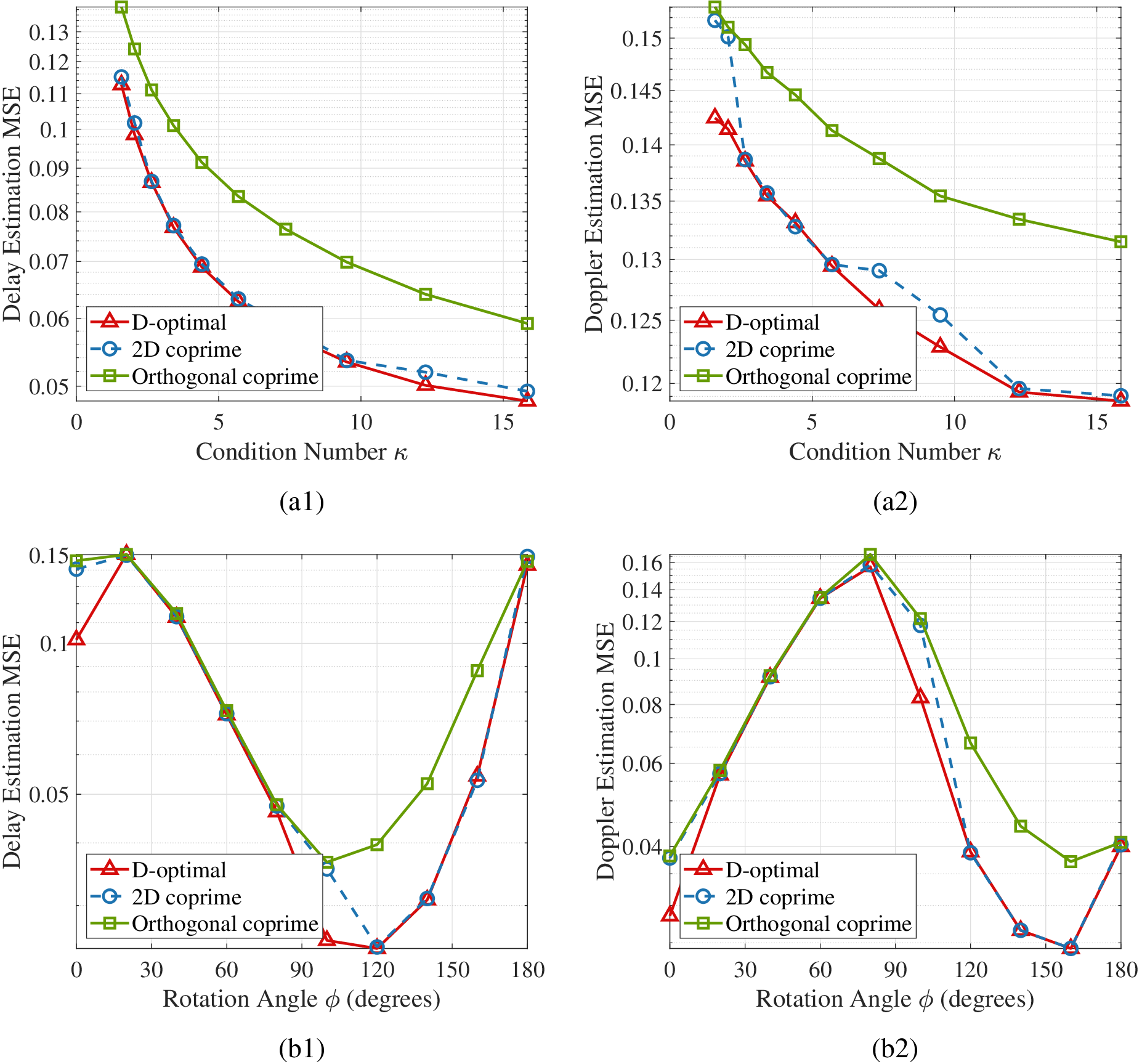}
\caption{ The robustness of sensing accuracy with different pilot patterns under varying (a) condition number $\kappa$ and (b) rotation angle $\phi$. }
\label{fig:bounds}
\end{figure}

\appendices

\section{Proof of Theorem~1}
\label{app:proof_thm1}

The unique recovery of $(q_\tau, q_\nu)$ is equivalent to determining whether the system of linear equations admits integer solutions that isolate the target variables.
For example, to solve $q_\tau$ in \eqref{sunau1}, we need to reach the unit vector $[1, 0]^T$ in the integer span:  
\begin{equation}\label{Ap1}
\sum_{i=1}^{R_d} \alpha_i \begin{bmatrix} u_i \\ v_i \end{bmatrix} 
+ k_1 \begin{bmatrix} N_c \\ 0 \end{bmatrix} 
+ k_2 \begin{bmatrix} 0 \\ N_s \end{bmatrix} 
= \begin{bmatrix} 1 \\ 0 \end{bmatrix}, \quad\alpha_i, k_1, k_2 \in \mathbb{Z}. 
\end{equation}

According to the generalized Bézout's identity, the solvability of the linear Diophantine system in \eqref{Ap1} requires that the integer lattice generated by its constituent vectors covers the entire $\mathbb{Z}^2$ space \cite{newman1972integral,cassels1997introduction}. To algebraically characterize this lattice, we define an augmented matrix $\mathbf{M} \in \mathbb{Z}^{(R_d+2) \times 2}$ whose rows consist of constituent vectors as: 
\begin{equation}
\mathbf{M} = 
\begin{bmatrix}
u_1     & v_1     \\
\vdots  & \vdots  \\
u_{R_d} & v_{R_d} \\
N_c     & 0       \\
0       & N_s
\end{bmatrix} .
\end{equation} 
Based on the Smith normal form of integer matrices\cite{smith1861xv}, the requirement that the row space of $\mathbf{M}$ encompasses $\mathbb{Z}^2$ (and thus explicitly contains the standard basis vectors $ [1, 0]^T$ and $ [0, 1]^T$) is satisfied if and only if the greatest common divisor of all $2 \times 2$ minors of $\mathbf{M}$ is equal to 1. The complete set of $2 \times 2$ minors of $\mathbf{M}$ can be given by 

\begin{enumerate}
\item \textit{Pilot-pilot minors} (rows $i$ and $j$, $i < j$):
   \begin{equation*}
   \det\begin{bmatrix} u_i & v_i \\ u_j & v_j \end{bmatrix} = u_i v_j - u_j v_i = \det(\mathbf{P}_{ij}). 
   \end{equation*}

\item \textit{Pilot-periodicity minors}:
   \begin{align*}
   \det\begin{bmatrix} u_i & v_i \\ N_c & 0 \end{bmatrix} &= -v_i N_c, \\
   \det\begin{bmatrix} u_i & v_i \\ 0 & N_s \end{bmatrix} &= u_i N_s.
   \end{align*}

\item \textit{Periodicity-periodicity minor}:
   \begin{equation*}
   \det\begin{bmatrix} N_c & 0 \\ 0 & N_s \end{bmatrix} = N_c N_s.
   \end{equation*}
\end{enumerate}
Therefore, the necessary and sufficient condition for the lattice to span $\mathbb{Z}^2$ (and hence unique resolution of $(q_\tau, q_\nu)$) is given by:  
\begin{equation}
\gcd\!\left( \bigcup_{i,j} \{ u_i v_j - u_j v_i \} \cup \bigcup_i \{ u_i N_s, v_i N_c \} \cup \{ N_c N_s \} \right) = 1,  
\end{equation}
which completes the proof of Theorem~\ref{theorem2}. 

In the context of pilot design, the linear combinations of relative pilot intervals must be able to synthesize the most refined resolution grid of symbols and subcarriers. Under such circumstances, the pilot pattern will completely untangle the coupled delay-Doppler phase and resolve sensing ambiguity.

\section{SMI Computation}
\label{app:smi_computation}

This appendix details the numerical computation of the SMI. 
For mathematical convenience, we first vectorize the transmitted and received pilot signal. Denote $\mathbf{y_\text{RS}} = \text{vec}(\mathbf{Y}_\text{RS}) \in \mathbb{C}^{D}$, where $D = N_c N_s$, and similarly $\mathbf{x}_\text{RS} = \text{vec}(\mathbf{X}_\text{RS})$. 
The SMI between the received signal $\mathbf{y_\text{RS}}$ and the channel parameters $\boldsymbol{\theta}$ is defined as:  
\begin{equation}
I(\mathbf{y_\text{RS}}; \boldsymbol{\theta}) = h(\mathbf{y_\text{RS}}) - h(\mathbf{y_\text{RS}} | \boldsymbol{\theta}). 
\end{equation}
Since direct calculation of the SMI is intractable, we evaluate it by separately computing its two constituent components: the conditional entropy $h(\mathbf{y_\text{RS}} | \boldsymbol{\theta})$ and marginal entropy $h(\mathbf{y_\text{RS}})$. To this end, we first establish the conditional distribution of the received signal $\mathbf{y_\text{RS}}$, which are necessary for calculating the concerned entropies. Subsequently, we explicitly demonstrate how to compute the conditional entropy and evaluate the marginal entropy via Monte Carlo estimation.

\subsubsection{Conditonal Distribution}
Let $\mathbf{A}(\tau, \nu) \in \mathbb{C}^{D \times 1}$ represent the steering vector for the given DD parameters $\boldsymbol{\theta} = [\tau, \nu]^T$, whose $d$-th element is defined as:   
\begin{equation}
[\mathbf{A}(\tau, \nu)]_{d} =  e^{-j2\pi (m \Delta f \tau - n T_s \nu)}, \ d = m + N_c(n-1). 
\end{equation}
The received signal $\mathbf{y_\text{RS}}$ follows a complex Gaussian distribution for the given DD parameters $\boldsymbol{\theta}$:  
\begin{equation}\label{guassdisappd}
\mathbf{y_\text{RS}} | \boldsymbol{\theta} \sim \mathcal{CN}\left(\mathbf{0}, \mathbf{\Sigma_{y_\text{RS}}}(\boldsymbol{\theta})\right),
\end{equation}
where the conditional covariance matrix is given by:  
\begin{equation}
\mathbf{\Sigma_{y_\text{RS}}}(\boldsymbol{\theta}) = \sigma^2_{\xi}  (\mathbf{A}(\tau, \nu) \odot \mathbf{x_\text{RS}})(\mathbf{A}(\tau, \nu) \odot \mathbf{x_\text{RS}})^H + \sigma^2 \mathbf{I}_D.
\end{equation}
Consider the DD parameters $\boldsymbol{\theta}$ over a finite lattice grid:  
\begin{itemize}
\item $\tau \in \{\tau^{(1)}, \tau^{(2)}, \dots, \tau^{(L_\tau)}\}$ with $L_\tau$ discrete values.
\item $\nu \in \{\nu^{(1)}, \nu^{(2)}, \dots, \nu^{(L_\nu)}\}$ with $L_\nu$ discrete values. 
\end{itemize}
As a result, the total number of possible channel states is given by $T = L_\tau \times L_\nu$, and each state occurs with equal probability $1/T$.

\subsubsection{Conditional Entropy Calculation}

\begin{algorithm}[t]
\caption{Monte Carlo Estimation of SMI}
\label{alg:smi_estimation}
\begin{algorithmic}[0]
\Require $N_c, N_s, \xi, \sigma^2, \mathbf{x}_{\text{RS}}, \{\tau^{(i)}\}_{i=1}^{L_\tau}, \{\nu^{(j)}\}_{j=1}^{L_\nu}, M$
\Ensure Estimated mutual information $\hat{I}(\mathbf{y}_{\text{RS}}; \boldsymbol{\theta})$
\Statex \textbf{Step 1: Pre-computation for each channel state}
\For{$t = 1$ \textbf{to} $T = L_\tau \times L_\nu$}
    \State Construct steering vector $\mathbf{a}_t = \mathbf{A}(\boldsymbol{\tau}_t, \boldsymbol{\nu}_t)$ and $\mathbf{A}_t = \mathbf{a}_t \odot \mathbf{x}_{\text{RS}}$, where {$\boldsymbol{\tau}_t, \boldsymbol{\nu}_t$ are the $t$-th lattice on the grid}
    \State Compute $\mathbf{\Sigma}_t =\sigma^2_{\xi}  \mathbf{A}_t \mathbf{A}_t^H + \sigma^2 \mathbf{I}_D$ \quad ($D = N_c N_s$) and store $\mathbf{\Sigma}_t^{-1}$ 
    \State Compute Cholesky decomposition $\mathbf{\Sigma}_t = \mathbf{L}_t \mathbf{L}_t^H$, store $\mathbf{L}_t$ and $\log\det(\mathbf{\Sigma}_t)$
\EndFor
\Statex \textbf{Step 2: Generate Monte Carlo samples}
\For{$t = 1$ \textbf{to} $T$}
    \For{$m = 1$ \textbf{to} $M$}
        \State Sample $\mathbf{z} \sim \mathcal{CN}(\mathbf{0}, \mathbf{I}_D)$
        \State Compute $\mathbf{y}_{\text{RS}, t,m} = \mathbf{L}_t \mathbf{z}$
        \State Store $\mathbf{y}_{\text{RS}, t,m}$ in $\mathbf{y}_{\text{RS}, \text{samples}}$
    \EndFor
\EndFor
\Statex \textbf{Step 3: Estimate marginal entropy}
\State Initialize $\text{sum\_log\_density} = 0$
\For{$i = 1$ \textbf{to} $TM$}
    \State $\mathbf{y}_i = \mathbf{y}_{\text{RS}, \text{samples}}[i]$
    \For{$t = 1$ \textbf{to} $T$}
        \State $\ell_t = -D\log\pi - \log\det(\mathbf{\Sigma}_t) - \mathbf{y}_i^H \mathbf{\Sigma}_t^{-1} \mathbf{y}_i$
    \EndFor
    \State $\ell_{\max} = \max(\ell_1, \dots, \ell_T)$
    \State $\log \hat{p}(\mathbf{y}_i) = \ell_{\max} + \log\!\left( \sum_{t=1}^{T} e^{(\ell_t - \ell_{\max})} \right) - \log T$
    \State $\text{sum\_log\_density} \gets \text{sum\_log\_density} + \log \hat{p}(\mathbf{y}_i)$
\EndFor
\State $\hat{h}(\mathbf{y}_{\text{RS}}) = -\frac{1}{TM} \cdot \text{sum\_log\_density}$ (nats)
\Statex \textbf{Step 4: Compute conditional entropy }
\State $\hat{h}(\mathbf{y}_{\text{RS}} \mid \boldsymbol{\theta}) = \frac{1}{T} \sum_{t=1}^{T} \log\!\left( (\pi e)^D \det(\mathbf{\Sigma}_t) \right)$
\Statex \textbf{Step 5: Compute SMI}
\State $\hat{I}(\mathbf{y}_{\text{RS}}; \boldsymbol{\theta}) = \hat{h}(\mathbf{y}_{\text{RS}}) - \hat{h}(\mathbf{y}_{\text{RS}} \mid \boldsymbol{\theta})$
\State \Return $\hat{I}(\mathbf{y}_{\text{RS}}; \boldsymbol{\theta})$
\end{algorithmic}
\end{algorithm}

The conditional entropy $h(\mathbf{y_\text{RS}} | \boldsymbol{\theta})$ can be computed as:
\begin{equation}
\begin{aligned}
h(\mathbf{y_\text{RS}} | \boldsymbol{\theta}) &= \mathbb{E}_{\boldsymbol{\theta}} \left[ \log \left( (\pi e)^D \det(\mathbf{\Sigma_{\mathbf{y_\text{RS}}}}(\boldsymbol{\theta})) \right) \right] \\
&= \frac{1}{T} \sum_{t=1}^{T} \log \left( (\pi e)^D \det(\mathbf{\Sigma}_t) \right),
\end{aligned}  
\end{equation}  
where $\mathbf{\Sigma}_t = \mathbf{\Sigma_{\mathbf{y_\text{RS}}}}(\boldsymbol{\theta}_t)$ denotes the $t$-th channel state.

\subsubsection{Marginal Entropy approximation}
To obtain the marginal entropy $h(\mathbf{y_\text{RS}})$, the computation for the following expectation is required:  
\begin{equation}\label{expectation}
h(\mathbf{y_\text{RS}}) = -\mathbb{E}_{\mathbf{y_\text{RS}}} \left[ \log p(\mathbf{y_\text{RS}}) \right], 
\end{equation}
where the PDF $p(\mathbf{y_\text{RS}})$ is given by:  
\begin{equation}
\label{mashijuli}
p(\mathbf{y_\text{RS}}) = \frac{1}{T} \sum_{t=1}^{T} p(\mathbf{y_\text{RS}} | \boldsymbol{\theta}_t) = \frac{1}{T} \sum_{t=1}^{T} \frac{e^{\left(-\mathbf{y_\text{RS}}^H \mathbf{\Sigma}_t^{-1} \mathbf{y_\text{RS}}\right)}}{\pi^D \det(\mathbf{\Sigma}_t)} .
\end{equation}
Unfortunately, the expectation in \eqref{expectation} lacks a closed-form solution. We thus evaluate it via Monte Carlo simulations. Specifically, Cholesky decomposition is utilized to factorize the covariance matrix as $\mathbf{\Sigma}_t = \mathbf{L}_t \mathbf{L}_t^H$, enabling efficient generation of high-dimensional samples $\mathbf{y}_{\text{RS}, t} = \mathbf{L}_t \mathbf{z}$ from normal vectors $\mathbf{z}$. Furthermore, we employ the Log-Sum-Exp for the marginal PDF in \eqref{mashijuli} to perform the accumulation in the logarithmic domain and prevent floating-point underflow. 

With the marginal and conditional entropies evaluated, the SMI is readily computed as their difference. The numerical method for computing SMI is outlined in Algorithm~\ref{alg:smi_estimation}.

 \bibliographystyle{IEEEtran}
\bibliography{refs}

\end{document}